\acrodef{l.h.s.}[l.h.s.]{left hand side}
\acrodef{r.h.s.}[r.h.s.]{right hand side}
\acrodef{w.p.}[w.p.]{with probability}
\acrodef{MSE}[MSE]{mean-squared error}
\acrodef{MAP}[MAP]{maximum \emph{a-posteriori} probability}
\acrodef{RMSE}[RMSE]{root-MSE}
\acrodef{ULA}[ULA]{uniform linear array}
\acrodef{MIMO}[MIMO]{multiple-input multiple-output}
\acrodef{MMSE}[MMSE]{{minimum mean-squared-error}}
\acrodef{ADC}[ADC]{analog-to-digital converters}
\acrodef{s.t.}[s.t.]{such that}
\acrodef{MAF}[MAF]{misspecified ambiguity function}
\acrodef{AF}[AF]{ambiguity function}
\acrodef{PMF}[PMF]{probability mass function}
\acrodef{PSD}[PSD]{power spectral density}
\acrodef{CDF}[CDF]{cumulative distribution function}
\acrodef{SNR}[SNR]{signal-to-noise ratio}
\acrodef{SNRs}[SNRs]{signal-to-noise ratios}
\acrodef{iid}[i.i.d.]{independent and identically distributed}
\acrodef{CRB}[CRB]{Cram$\acute{\text{e}}$r-Rao bound}
\acrodef{MCRB}[MCRB]{misspecified CRB}
\acrodef{FIM}[FIM]{Fisher information matrix}
\acrodef{MBCRB}[MBCRB]{misspecified Bayesian Cram$\acute{\text{e}}$r-Rao bound}
\acrodef{MVDR}[MVDR]{minimum-variance distortionless response}
\acrodef{AT-MBCRB}[AT-MBCRB]{asymptotically tight misspecified Bayesian Cram$\acute{\text{e}}$r-Rao bound}
\acrodef{PDF}[PDF]{probability density function}
\acrodef{w.r.t.}[w.r.t.]{with respect to}
\acrodef{AWGN}[AWGN]{additive white Gaussian noise}
\acrodef{ML}[ML]{maximum-likelihood}
\acrodef{DOA}[DOA]{direction-of-arrival}
\acrodef{KLD}[KLD]{Kullback-Leibler divergence}
\newcommand{\avec}{{\bf{a}}}
\newcommand{\bvec}{{\bf{b}}}
\newcommand{\dvec}{{\bf{d}}}
\newcommand{\fvec}{{\bf{f}}}
\newcommand{\uvec}{{\bf{u}}}
\newcommand{\xvec}{{\bf{x}}}
\newcommand{\zvec}{{\bf{z}}}
\newcommand{\svec}{{\bf{s}}}
\newcommand{\vvec}{{\bf{v}}}
\newcommand{\hvec}{{\bf{h}}}
\newcommand{\onevec}{{\bf{1}}}
\newcommand{\zerovec}{{\bf{0}}}
\newcommand{\Amat}{{\bf{A}}}
\newcommand{\Bmat}{{\bf{B}}}
\newcommand{\Cmat}{{\bf{C}}}
\newcommand{\Hmat}{{\bf{H}}}
\newcommand{\Jmat}{{\bf{J}}}
\newcommand{\Imat}{{\bf{I}}}
\newcommand{\Mmat}{{\bf{M}}}
\newcommand{\Pmat}{{\bf{P}}}
\newcommand{\Rmat}{{\bf{R}}}
\newcommand{\Diag}{{\rm Diag}}
\def\thetavec{{\boldsymbol{\theta}}}
\def\muvec{{\mbox{\boldmath $\mu$}}}
\newcommand{\E}{{\rm{E}}}
\newcommand{\be}{\begin{equation}}
\newcommand{\ee}{\end{equation}}
\newcommand{\beqna}{\begin{eqnarray}}
\newcommand{\eeqna}{\end{eqnarray}}
\def\user@resume{resume}
\def\user@intermezzo{intermezzo}
\newcounter{previousequation}
\newcounter{lastsubequation}
\newcounter{savedparentequation}
\newcommand{\diff}{{\textnormal{d}}}
\newtheorem{prop}{Proposition}
\DeclareMathAlphabet\mathbfcal{OMS}{cmsy}{b}{n}
\title{MCRB for Parameter Estimation from One-Bit Quantized and Oversampled Measurements}
\author{
Nadav E. Rosenthal, \emph{Student Member, IEEE}, and Joseph Tabrikian,~\IEEEmembership{Fellow,~IEEE}
 \vspace{-20pt}
\thanks{{This research was partially supported by THE ISRAEL SCIENCE FOUNDATION (grant No. 2493/23).\newline
\indent The authors are with the School of Electrical and Computer Engineering,
Ben-Gurion University of the Negev, Beer-Sheva 84105, Israel (e-mail: rosenthn@post.bgu.ac.il; joseph@bgu.ac.il)
}
}

}
\begin{document}

\maketitle
%\vspace{-2cm}
\nopagebreak

\begin{abstract}
One-bit quantization has garnered significant attention in recent years for various signal processing and communication applications. Estimating model parameters from one-bit quantized data can be challenging, particularly when the quantization process is explicitly accounted for in the estimator. In many cases, the estimator disregards quantization effects, leading to model misspecification. Consequently, estimation errors arise from both quantization and misspecification. Traditional performance bounds, such as the Cramér-Rao bound (CRB), fail to capture the impact of misspecification on estimation performance. To address this limitation, we derive the misspecified CRB (MCRB) for parameter estimation in a quantized data model consisting of a signal component in additive Gaussian noise. We apply this bound to direction-of-arrival estimation using quantized measurements from a sensor array and to frequency estimation with oversampled quantized data. The simulations show that the MCRB is asymptotically achieved by the mean-squared-error of the misspecified maximum-likelihood estimator. Our results demonstrate that, unlike in finely quantized scenarios, oversampling can significantly enhance the estimation performance in the presence of misspecified one-bit quantized measurements.
\end{abstract}
\begin{IEEEkeywords}
Quantization, performance bounds, mean-squared-error, misspecified Cram\'{e}r-Rao bound (MCRB).
\end{IEEEkeywords}

\section{Introduction}
\label{Intro_Section}
Quantization is commonly used in communications and digital signal processing. 
For example, in array signal processing, \ac{ADC} \cite{761034,7258493} convert the analog signal received from the antennas into discrete values, which can be stored and processed by a digital computer. Reducing the number of bits required to represent the collected data decreases the storage and bandwidth requirements, which are essential for data compression and transmission. 

Implementations of one-bit quantization in signal processing and communication systems, which are inexpensive and involve low
energy consumption, have been vastly investigated \cite{272490,mo2017channel,7961157,4558487,7742960,7931630,mollen2016uplink,9413979,9053855,8010806,7837644,8240730}. Recent works have investigated the effects of one-bit quantization on channel estimation \cite{7472499,8314750,708938,5501995}, frequency estimation \cite{824661}, and \ac{DOA} estimation \cite{1039405}. Many system designs have been suggested based on one-bit quantized data in \ac{MIMO} and massive \ac{MIMO} problems \cite{7439790,9253719,8450809,9984699,7472305}.

Quantization introduces signal distortion which leads to system performance degradation \cite{1057548}. In addition, signal processing algorithms that are aware of the quantization become complex and involve higher computational complexity, which may increase the cost as well as higher energy consumption. In \cite{1039405}, a \ac{MVDR} estimator for \ac{DOA} estimation based on reconstructed covariance matrix of one-bit quantized measurements was proposed. However, this estimator collapses for high \ac{SNRs}, because the reconstructed covariance matrix becomes singular. Other estimators based on empirical methods
for evaluating the orthant probabilities have been suggested \cite{1039405}, but they require Monte-Carlo simulations and lack closed-form expressions. Channel estimation in massive \ac{MIMO} systems based on low-resolution is also challenging due to the nonlinear distortion caused by quantization \cite{mo2017channel}. Several channel estimators have been proposed based on deep neural networks \cite{9152072,10145921,balevi2020high,8807322,8918799}.

An appealing and simple approach for processing one-bit quantized data is to implement conventional estimators that ignore the effect of quantization. The question that arises is: what is the expected performance when conventional algorithms that disregard quantization, are applied to one-bit quantized data? In this case, estimation errors are due to both quantization effect and model misspecification. 

The effect of one-bit quantization on parameter estimation can be investigated via estimation performance bounds. Performance bounds provide important tools in signal processing, since they serve as a benchmark for performance evaluation of estimators and are useful for system design. The most common performance bound in the non-Bayesian framework is the \ac{CRB} \cite{CRB1,CRB2}. Its popularity stems from its simplicity and asymptotic attainability. 

Derivation of lower bounds for estimation performance with one-bit quantized measurements has been challenging. For example, an explicit formula for the \ac{CRB} for frequency estimation performance does not exist, and the \ac{CRB} for large-enough \ac{SNR} was investigated \cite{824661}. For the \ac{DOA} estimation problem investigated in \cite{1039405}, the \ac{CRB} was derived explicitly only for a two-sensor array. The \ac{CRB} and Fisher information for channel estimation for oversampled quantized measurements cannot be analytically derived \cite{9085989}, because the derivations require evaluation and explicit formulas for orthant probabilities \cite{10.5555/1695822,9fea1a2a-fee6-329f-a9d3-78a7cb246e6b,19990306}. Thus, approximations for the \ac{CRB} were proposed based on Fisher information lower bounds \cite{9085989,7953006,8445905,Stein_2014}. 

% In addition to the fact that these bounds approximate the \ac{CRB} and do not give explicit terms for the \ac{CRB}, they were derived under the assumption of perfectly specified models of quantized data. 
The \ac{CRB} assumes perfect model specification by the estimator, thus it cannot consider the influence of misspecification on the estimation performance. Therefore, it is unable to predict the expected estimation performance of conventional estimators implemented on one-bit quantized data. 

In \cite{White,White_Book,Noam_Tabrikian} the asymptotic properties of the \ac{ML} estimator were investigated under misspecified models where the data samples are statistically independent. 
The \ac{MCRB} was derived in \cite{Vuong86} as an extension of the \ac{CRB} to misspecified scenarios.  Continuing this theory, the effects of model misspecification on the \ac{MSE} of the \ac{ML} in the asymptotic region were studied in \cite{Richmond,Richmond2}. An important overview on misspecified parameter estimation can be found in \cite{Stefano3}.

Many works have investigated the advantages of oversampling quantized data on information rates and channel capacity \cite{9085989,9354165,265506,8444445,6362760,5662127,335948,10.1109/TSP.2024.3356253,8006896}. 
% The oversampling procedure may result in colored noise \cite{7953006}. 
It is well-known that in case of $\infty$-bit quantization, oversampling a bandlimited signal beyond the Nyquist rate does not improve the estimation or signal recovery performance \cite{10.5555/248702}. However, in \cite{5662127,335948}, it was shown that oversampling a bandlimited signal with \ac{AWGN} and one-bit quantization outperforms Nyquist sampling in terms of capacity per unit-cost and achievable rate at high \ac{SNR}. The effect of oversampling one-bit quantized data on estimation performance is unknown. Therefore, a tool for analysis of such problems may be very helpful in signal processing theory and applications. 

In this paper, the effects of model misspecification due to ignoring one-bit quantization are investigated. We derive the \ac{MCRB} and the expected bias of estimators that ignore one-bit quantization. Using the derived \ac{MCRB}, we analyze the expected performance of estimators implemented on one-bit quantized data. Moreover, we use the \ac{MCRB} in order to show how the estimation performance can be improved by oversampling of one-bit quantized data. The computational complexity of the estimation procedure with fine quantization and one-bit measurements are compared in
terms of computational complexity.
We consider the problems of \ac{DOA} and frequency estimation using one-bit quantized measurements, and evaluate the expected bias and the \ac{MCRB}, compared to the misspecified \ac{ML} estimator. 
In the problem of \ac{DOA} estimation, the derived \ac{MCRB} is compared with the \ac{CRB} based on one-bit quantized measurements.
Moreover, the \ac{MCRB} identifies scenarios where misspecified quantization heavily degrades the performance. 
In the problem of frequency estimation, oversampling of the quantized data is shown to improve the estimation performance. A conference version of this work with some preliminary results is expected to appear in \cite{NadavJosephOneBit}. This paper includes several extensions, such as derivation of \ac{MCRB} for colored noise, performance analysis of oversampling for the problem of frequency estimation through the \ac{MCRB}, and computational complexity analysis of the \ac{ML} estimator with misspecified one-bit quantized measurements.

The main contributions of this paper are:
\begin{itemize}
    \item Derivation of the \ac{MCRB} for predicting the expected performance of estimators that ignore one-bit quantization. Its final form is more explicit and easy to evaluate than the \ac{CRB} for one-bit quantized measurements. 
    % \item Comparison of the computational complexity of \ac{ML}-based estimation procedures with fine quantization and one-bit measurements under misspecified modeling.
    \item Application of the derived \ac{MCRB} for the problems of \ac{DOA} estimation and frequency estimation with one-bit quantized measurements.
    \item Exploration of oversampling effects on estimation performance with one-bit quantized measurements. It is shown that
    unlike the case of $\infty$-bit quantization, oversampling of one-bit quantized data can improve the expected estimation  performance. 
\end{itemize}

Throughout this paper, the following notations are used. Boldface lowercase and boldface uppercase letters are used to denote vectors and matrices, respectively. Unbold letters of either lowercase or uppercase are used for scalars. Superscripts $^T$, $^H$, and $^*$ stand for transpose, conjugate transpose, and conjugation operations, respectively. The real and imaginary parts of a variable $b\in \mathbb{C}$ are denoted by $\Re(b)$ or $b_{R}$, and $\Im(b)$ or $b_{I}$, respectively.
The gradient of a scalar $b$ \ac{w.r.t.} $\avec \in\mathbb{R}^K$ is a column vector, whose $j$-th element is defined as $\left[\frac{\diff b(\avec)}{\diff \avec}\right]_{j}\triangleq\frac{\partial b(\mathbf{c})}{\partial c_j}\Big|_{\mathbf{c}=\avec}$. 
%The Hessian matrix of a scalar $b$ \ac{w.r.t.} $\Amat$ is a $K\times K$ matrix defined by $\frac{\diff^2 b(\Amat)}{\diff\Amat\diff^T\Amat}\triangleq\frac{\diff}{\diff\Amat}\left(\frac{\diff b}{\diff^T\Amat}\right)$. 
Given a vector $\mathbf{b}$, its derivative \ac{w.r.t.} $\avec$ is a matrix whose $j,k$-th entry is defined as $\left[\frac{\diff \mathbf{b}}{\diff \avec}\right]_{j,k}\triangleq\frac{\partial b_j(\mathbf{c})}{\partial c_k}\Big|_{\mathbf{c}=\avec}$. The first and second-order derivatives of a function vector $\bvec(c)$ \ac{w.r.t.} the scalar $c$ are denoted by $\Dot{\bvec}(c)\triangleq \frac{d \bvec(c)}{d c}$ and $\Ddot{\bvec}(c)\triangleq \frac{d^2 \bvec(c)}{d c^2}$, respectively. The notation $\Amat\succeq\mathbf{B}$ implies that $\Amat-\mathbf{B}$ is a positive-semidefinite matrix, where $\Amat$ and $\mathbf{B}$ are symmetric matrices of the same size. $\det(\Amat)$ and $tr(\Amat)$ stand for the determinant and trace of the matrix $\Amat$, respectively.  
%The notations $[\times]_{i,:}$ and $[\times]_{:,i}$ stand for the $i$-th row and the $i$-th column, respectively, of their matrix argument. 
$\Diag(\avec)$ is a diagonal matrix whose diagonal elements are the entries of the vector $\avec$. Column vectors of size $N$, whose entries are equal to 0 or 1 are denoted by $\zerovec_N$ and $\onevec_N$, respectively, and the identity matrix of size $N\times N$ is denoted by $\Imat_N$. 

This paper is organized as follows. Section \ref{sec: One-Bit Quantized Data Model} presents the model for one-bit quantized measurements. 
In Section \ref{sec: MCRB derivation} the \ac{MCRB} is derived for estimation procedures that ignore the one-bit quantization. The derivation is extended to the case of over-sampled quantized data models. In Section \ref{sec: algorithms}, \ac{ML} estimation based on fine quantization and one-bit measurements are presented and compared in terms of computational cost. Simulation results are presented in Section \ref{sec: Simulations}.
Finally, our conclusions are summarized in Section \ref{sec: Conclusion}. 

\section{One-Bit Quantized Data Model}
\label{sec: One-Bit Quantized Data Model}
Consider the problem of estimating the vector parameter $\thetavec \in \Omega_{\thetavec} \subseteq \mathbb{R}^M$ from a nonlinear model with additive noise. The data model before quantization is given by
\begin{align} \label{eq: x model}
    \xvec = \svec(\thetavec) + \vvec,
\end{align}
where $\vvec \in \mathbb{C}^N
$ is complex Gaussian noise with zero mean and covariance matrix $\Rmat$, and $\svec:  \Omega_{\thetavec} \rightarrow \mathbb{C}^N$
is the signal function vector. The elements of the covariance matrix $\Rmat$ are denoted by
\begin{align}
    \label{eq: R elements}\left[\Rmat\right]_{n,m} \triangleq \left\{ 
    \begin{array}{ll}
    \sigma^2_n, & n=m \\
    \rho_{n,m}, &  n\neq m
    \end{array}
\right.,\quad n,m=1,\dots,N. 
\end{align} 
The real and imaginary parts of the measurements are one-bit quantized:
\begin{align} \label{eq: z_n}
\begin{split}
    z_n &= \text{sign}\left(x_{n,R} \right) + j \cdot \text{sign}\left(x_{n,I} \right),\quad n =1,\dots,N,
\end{split}
\end{align}
where  \begin{align}
    \text{sign}\left(x\right) \triangleq \left\{ 
    \begin{array}{ll}
    +1, & x \geq 0\\
    -1, &  x < 0
    \end{array}
\right..
\end{align} 

The \ac{PMF} of the real and imaginary parts of the quantized measurements (\ref{eq: z_n}) can be derived as follows \cite{8314750}:
\begin{align}\label{eq: z_n probabilites 1}
\begin{split}
    p_{z_{n,R};\thetavec}(z_{n,R};\thetavec) &= Q\left(z_{n,R} q_{n,R}(\thetavec)\right),\quad z_{n,R} \in \{-1,1\},\\
\end{split}
\end{align}
and
\begin{align}\label{eq: z_n probabilites 2}
    \begin{split}
p_{z_{n,I};\thetavec} \left(z_{n,I};\thetavec\right) &= Q\left(z_{n,I} q_{n,I}\left(\thetavec\right)\right),\quad z_{n,I} \in \{-1,1\},
    \end{split}
\end{align}
where 
\begin{align} \label{eq: q_n}
    q_n (\thetavec) \triangleq -\frac{s_n(\thetavec)}{\frac{\sigma_n}{\sqrt{2}}}, \quad n=1,\dots,N,
\end{align}
$s_n(\thetavec)$ is the $n$-th signal function, and 
\begin{align}
    Q(x) \triangleq \frac{1}{\sqrt{2\pi}} \int_{x}^{\infty}e^{-\frac{u^2}{2}}du
\end{align} is the standard Q-function. 
Let the series of events $\{\omega^{(k_n)}_{n}\}^N_{n=1},\quad k_n=0,\dots,3,$ specify the possible values of the elements of $\zvec \triangleq\left[z_1,\dots,z_N\right]^T$: 
\begin{align}
\label{eq: z elements events} \begin{split}
\omega^{(k_n)}_{n} \triangleq & \left\{ 
\begin{array}{ll}
    z_n = -1-j ,& \text{if} \:\: k_n=0\\
    z_n = -1+j ,& \text{if} \:\: k_n=1\\
    z_n = 1-j ,& \text{if} \:\: k_n=2\\
    z_n = 1+j ,& \text{if} \:\: k_n=3\\
    \end{array} 
\right.\\ 
=& \left\{ 
\begin{array}{ll}
    x_{n,R} < 0 \cap x_{n,I}<0,& \text{if} \:\: k_n=0\\
    x_{n,R} < 0 \cap x_{n,I}\geq0 ,& \text{if} \:\: k_n=1\\
    x_{n,R} \geq 0 \cap x_{n,I}<0 ,& \text{if} \:\: k_n=2\\
    x_{n,R} \geq 0 \cap x_{n,I}\geq 0 ,& \text{if} \:\: k_n=3\\
    \end{array} 
\right., n=1,\dots N,
\end{split}
\end{align}
where the second equality in (\ref{eq: z elements events}) is obtained by the quantization operators in (\ref{eq: z_n}).
The \ac{PMF}s of the quantized measurement vector, $\zvec$, are given by using (\ref{eq: z elements events}):
\begin{align}
    \label{eq: PMF zvec}
p_{\zvec;\thetavec}\left(\zvec;\thetavec\right) \triangleq Prob\left(\cap_{n=1}^N \omega^{(k_n)}_{n}\right),\quad \{k_n\}_{n=1}^{N} \in \{0,1,2,3\}.
\end{align}
The set of $4^N$ probabilities defined in (\ref{eq: PMF zvec}) is a set of non-centered orthant probability
functions (multivariate version of the Q-function) of the vector composed of the real and imaginary parts of the elements of $\xvec$ \cite{10.5555/1695822,9fea1a2a-fee6-329f-a9d3-78a7cb246e6b,19990306}. 
Unfortunately, closed-form expressions for the orthant probabilities in (\ref{eq: PMF zvec}) do not exist. Therefore, derivation of a procedure for estimation of $\thetavec$ based on the true \ac{PMF}s in (\ref{eq: PMF zvec}) is complex and impossible in the general setting. 

Instead of using a complex estimation algorithm that accounts for the quantization, 
a simple approach for processing one-bit quantized data is to implement a conventional estimator that ignores the effect of quantization.
The assumed model $f(\zvec;\thetavec)$ ignores the quantization operator and assumes the vector $\vvec$ is complex white Gaussian with covariance matrix $\sigma^2 \Imat_N$. Under this model, the distribution of  $\zvec$ is  given by
\begin{align} \label{eq: z assumed PDF}
   f: \zvec \sim \mathcal{N}^C\left(\svec(\thetavec),\sigma^2 \Imat_N\right).
\end{align}
In Section \ref{sec: MCRB derivation}, we derive the \ac{MCRB} for estimation procedures that ignore one-bit quantization. The derived \ac{MCRB} provides a useful analysis tool which can serve for both performance evaluation and system design.
% The expectation and second-order moments of the quantized measurement $\zvec$ are derived in the appendix and will be used in Section \ref{sec: DOA estimation}.

\section{MCRB derivation} \label{sec: MCRB derivation}
In this section, we consider a basic model which is common in
many signal processing applications, such as frequency or \ac{DOA}
estimation. 
This model is a special case of the model presented in the previous section, where the signal function vector is given by 
\begin{align} \label{eq: DOA signal vector}
    \svec(\thetavec) \triangleq \beta \avec(\varphi), 
\end{align} 
where $\thetavec \triangleq [\varphi, \beta_R, \beta_I]^T$, $\varphi \in \Omega_{\varphi} \subseteq \mathbb{R}$ is a deterministic and unknown parameter-of-interest, and $\beta \in \mathbb{C}$ is deterministic unknown complex amplitude.
The function $\avec: \Omega_\varphi \rightarrow \mathbb{C}^N$ is assumed to be known and satisfy $\|\avec(\varphi)\|^2=1 $ and $\Dot{\avec}^H(\varphi) \avec(\varphi)=0, \quad \forall \varphi \in \Omega_\varphi$.
We ought to evaluate the quantization impact on the expected estimation performance of conventional estimators that disregard the quantization, in terms of bias and expected covariance \cite{Richmond}. \subsection{Estimation Bias and Pseudo-True Parameter}
For an estimator $\hat{\thetavec}(\zvec)$ the mean bias is defined by:
\begin{align} \label{eq: estimation bias}
\bvec\left(\hat{\thetavec},\thetavec\right) \triangleq \E \left[\hat{\thetavec}(\zvec)\right] - \thetavec,   
\end{align}
while the property of misspecified (MS) unbiasedness \cite{Stefano3} is defined by $\E\left[\hat{\thetavec}(\zvec)\right] = \thetavec_0$,
where $\thetavec_0$ is the pseudo-true parameter vector given by \cite{Vuong86}
\begin{align} \label{eq: pseudo true parameter}
\begin{split}
    \thetavec_0 &\triangleq \arg \max_{\thetavec'\in \Omega_\thetavec} \E \left [\log f(\zvec;\thetavec')  \right ].
\end{split}
\end{align}
The expectations in (\ref{eq: estimation bias}) and (\ref{eq: pseudo true parameter}), as well as throughout the paper, are taken \ac{w.r.t.} the true \ac{PMF}, and true parameter $\thetavec$. For cases that the true and assumed distributions are continuous, the pseudo-true parameter is alternatively defined by the parameter that minimizes the \ac{KLD} between the true and assumed distributions \cite{Stefano}. The misspecified \ac{ML} estimator is known to be asymptotically MS-unbiased.  
In this subsection, the estimation mean-bias in (\ref{eq: estimation bias}) is investigated for estimators $\hat{\thetavec}(\zvec)$ that are MS-unbiased and satisfy
\begin{align} \label{eq: MS-unbiased bias}
    \bvec\left(\hat{\thetavec},\thetavec\right) = \thetavec_0 - \thetavec.
\end{align}
% Under some regularity conditions, the pseudo-true parameter
% As the asymptotic expectation of the \ac{ML} estimator has been shown  
The assumed log-likelihood at the parameter vector $\thetavec'$ is given by using (\ref{eq: z assumed PDF}) and (\ref{eq: DOA signal vector}):
% (\ref{eq: pseudo-true theta}) in the \ac{PDF} of :
\begin{align}\label{eq: assumed log-likelihood}
    \log f\left(\zvec;\thetavec'\right) = -N\log\left(\pi \sigma^2\right) - \frac{\|\zvec-\beta' \avec(\varphi')\|^2}{\sigma^2},
\end{align}
where $\thetavec' = \left[\varphi',\beta'_R,\beta'_I\right]^T$.
% Substituting (\ref{eq: z assumed PDF}) and (\ref{eq: DOA signal vector}) in the \ac{r.h.s.} of (\ref{eq: pseudo true parameter}) gives 
% \begin{align} \label{eq: pseudo-true theta 2}
% \begin{split}
%     \thetavec_0 &= \arg \max_{\thetavec'\in \Omega_\thetavec} \E \left [-N \log\left(\pi \sigma^2\right) -\frac{\|\zvec-\beta'\avec(\varphi')\|^2}{\sigma^2} \right].\\
%     % &= \arg \max_{\thetavec'\in \Omega_\thetavec} \left[-\|\zvec\|^2+2\Re\left(\muvec^H \beta'\avec(\varphi')\right) - \left|\beta'\right|^2 \|\avec(\varphi'\|^2 \right]
% \end{split}
% \end{align}
By substituting (\ref{eq: assumed log-likelihood}) into the \ac{r.h.s.} of (\ref{eq: pseudo true parameter}), removing the terms that are independent of $\thetavec'$
and using $\|\avec(\varphi')\|^2=1$, one obtains
\begin{align} \label{eq: pseudo true argmax 1}
    \thetavec_0 = \arg \max_{\thetavec'\in \Omega_\thetavec} \left[2\Re \left(\beta' \muvec^H(\thetavec)  \avec(\varphi')\right)- \left|\beta'\right|^2 \right], 
\end{align}
where $\muvec(\thetavec) \triangleq \E \left[\zvec\right]$ is derived in (\ref{eq: mu R n}) and (\ref{eq: mu I n}) in Appendix A and depends on both $\varphi$ and $\beta$. Maximizing the expression in (\ref{eq: pseudo true argmax 1}) \ac{w.r.t.} $\thetavec'$ yields 
% \begin{align} \label{eq: beta_0 maximize}
%     \beta' = \avec^H(\varphi') \muvec(\thetavec).
% \end{align}
% Substituting the complex amplitude (\ref{eq: beta_0 maximize}) in the \ac{r.h.s.} of (\ref{eq: pseudo true argmax 1}) gives 
the pseudo-true parameter vector
\begin{align}
    \label{eq: pseudo-true theta}
    \thetavec_0 = \left[\varphi_0 ,\beta_{0,R},\beta_{0,I}\right]^T,
\end{align}
where the pseudo-true parameter-of-interest is 
\begin{align}\label{eq: varphi_0 maximized}
    \varphi_0 = \arg \max_{\varphi'} \left| MAF(\varphi',\thetavec) \right|,
\end{align}
in which the \ac{MAF} is defined by 
\begin{align}\label{eq: MAF}
    MAF(\varphi',\thetavec) \triangleq \frac{\avec^H(\varphi') \muvec(\thetavec)}{\max\limits_{\varphi'} \left| \avec^H(\varphi') \muvec(\thetavec)\right|},
\end{align}
and the pseudo-true complex amplitude is 
\begin{align}
    \beta_0 = \avec^H(\varphi_0) \muvec(\thetavec).
\end{align}
Note that the pseudo-true parameter-of-interest in (\ref{eq: varphi_0 maximized}) is derived by the parameter that maximizes the absolute of \ac{MAF} in (\ref{eq: MAF}), which is the correlation between the quantized measurements expectation, $\muvec(\thetavec)$, and the function, $\avec(\cdot)$.
Examining the 
\ac{MAF}, we note that it depends on $\thetavec$ through $\muvec(\thetavec)$. Specifically, it depends on the true complex amplitude $\beta$ by a nonlinear function. 
For the unquantized model in (\ref{eq: x model}), the \ac{AF} can be defined by $AF(\varphi',\varphi) \triangleq \frac{\avec^H(\varphi') \avec(\varphi)}{\|\avec(\varphi')\| \|\avec(\varphi)\|}$ (see \cite{10154126}),
which obtains its maximum at $\varphi'=\varphi$. Notice that the \ac{AF} is independent on $\beta$.
By substitution of (\ref{eq: varphi_0 maximized}) in the \ac{r.h.s.} of (\ref{eq: MS-unbiased bias}) and evaluating its first element, one obtains the estimation bias of the parameter-of-interest, $\varphi$,
\begin{align}
    \label{eq: DOA estimation bias}    b_1\left(\hat{\thetavec},\thetavec\right)  =  \arg \max_{\varphi'} \left| MAF(\varphi',\thetavec)\right| - \varphi.
\end{align}
\subsection{\ac{MSE} Performance and \ac{MCRB}}
In this subsection, the \ac{MCRB} is derived to investigate the expected \ac{MSE} performance for a MS-unbiased estimator $\hat{\thetavec}(\zvec)$, derived from the assumed model in (\ref{eq: z assumed PDF}). 
The MCRB for estimation of $\thetavec$ from the quantized measurements $\zvec$ is given by \cite{Stefano3}
\begin{align} \label{eq: MCRB}
    \mathbf{MCRB}(\thetavec) \triangleq \Amat^{-1}_{\thetavec_0} \Bmat_{\thetavec_0} \Amat^{-1}_{\thetavec_0}, 
    % + \bvec\left(\hat{\thetavec},\thetavec\right) \bvec^T\left(\hat{\thetavec},\thetavec\right),
\end{align}
where the matrices $\Bmat_{\thetavec_0}$ and $\Amat_{\thetavec_0}$ are defined by 
\begin{align} \label{eq: MCRB matrices 1}
    \Bmat_{\thetavec_0} \triangleq \E \left[\dvec(\zvec,\thetavec_0) \dvec^T(\zvec,\thetavec_0)\right],
\end{align}
\begin{align} \label{eq: MCRB matrices 2}
    \Amat_{\thetavec_0} \triangleq \E \left[\Hmat \left(\zvec,\thetavec_0\right)\right],
\end{align}
and
the gradient and Hessian of the assumed log-likelihood \ac{w.r.t.} $\thetavec$ at $\thetavec_0$ are given by
\begin{align}
    \label{eq: log-likelihood gradient}
    \dvec(\zvec,\thetavec_0) \triangleq \frac{\partial \log f\left(\zvec;\thetavec_0\right)}{\partial \thetavec}
\end{align}
and 
\begin{align}
    \label{eq: log-likelihood Hessian}
    \Hmat \left(\zvec,\thetavec_0\right) \triangleq \frac{\partial^2 \log f\left(\zvec;\thetavec_0\right)}{\partial \thetavec \partial \thetavec^T},
\end{align}
respectively.
By substitution of (\ref{eq: assumed log-likelihood}) at (\ref{eq: pseudo-true theta}) in (\ref{eq: log-likelihood gradient}) and (\ref{eq: log-likelihood Hessian}), followed by 
few calculations, we obtain
\begin{align}
    \label{eq: assumed log-likelihood gradient}
    \begin{split}
\dvec(\zvec,\thetavec_0) &= \frac{2}{\sigma^2}\begin{bmatrix}
d_1 \\
d_{2,R}  \\
d_{2,I}\\
\end{bmatrix}, 
    \end{split}
\end{align}
and
\begin{align}\label{eq: assumed log-likelihood Hessian}
    \begin{split}
\Hmat \left(\zvec,\thetavec_0\right) &= \frac{2}{\sigma^2} \begin{bmatrix} 
    h_1& h_{2,R}& h_{2,I} \\
    h_{2,R}& -1& 0\\
    h_{2,I}& 0& -1\\
    \end{bmatrix},\\
    \end{split}
\end{align}
where $d_1 = \Re\left( \zvec^H \Dot{\avec}(\varphi_0) \beta_0\right)$, $d_2 = \avec^H(\varphi_0) \zvec - \beta_0$, $h_1 \triangleq \Re\left(\zvec^H \Ddot{\avec}(\varphi_0) \beta_0\right)$ and $h_2 \triangleq \Dot{\avec}^H(\varphi_0) \zvec$. By substituting (\ref{eq: assumed log-likelihood gradient}) and (\ref{eq: assumed log-likelihood Hessian}) into (\ref{eq: MCRB matrices 1}) and (\ref{eq: MCRB matrices 2}), respectively, and using (\ref{eq: mu n}) from Appendix A, we obtain after a few lines of algebra,
    \begin{align}
        \begin{split} \label{eq: B theta}
          \Bmat_{\thetavec_0}   &= \frac{2}{\sigma^4} \begin{bmatrix} 
    L_1 & L_{2,R}& L_{2,I} \\
    L_{2,R}& L_5& L_4\\
    L_{2,I}& L_4& L_6\\
    \end{bmatrix} 
        \end{split}
    \end{align} 
    and
        \begin{align} \label{eq: A theta}
        \begin{split}
    \Amat_{\thetavec_0} &= \frac{2}{\sigma^2} \begin{bmatrix} 
    J_1 & J_{2,R}& J_{2,I} \\
    J_{2,R}& -1& 0\\
    J_{2,I}& 0& -1\\
    \end{bmatrix} ,\end{split}
    \end{align}
where the elements of (\ref{eq: B theta}) and (\ref{eq: A theta}) are 
\begin{align} \label{eq: A and B elements}
 \begin{split}
J_1 \triangleq & \E\left[h_1\right] =\Re\left(\muvec^H(\thetavec) \Ddot{\avec}(\varphi_0) \beta_0\right), \\
J_2 \triangleq &\E \left[h_2\right] = \Dot{\avec}^H(\varphi_0) \muvec(\thetavec),\\
L_1 \triangleq &\Re\left(\beta_0^2\Dot{\avec}^T(\varphi_0) \Pmat^{*} \Dot{\avec}(\varphi_0)\right) + \left|\beta_0\right|^2\Dot{\avec}^H(\varphi_0) \Mmat \Dot{\avec}(\varphi_0),\\
L_2 \triangleq&  \beta^*_0\Dot{\avec}^H(\varphi_0) \Pmat \avec^*(\varphi_0) \\
 &+ \beta_0 \left(\avec^H(\varphi_0) \Mmat \Dot{\avec}(\varphi_0) -2 \Re \left(J_2 \beta_0^*\right)\right),\\
L_3 \triangleq & \avec^H(\varphi_0) \Pmat \avec^*(\varphi_0),\\
L_4 \triangleq& 
L_{3,I} -2 \beta_{0,R}  \beta_{0,I},\\
    L_5 \triangleq &L_{3,R} + \avec^H(\varphi_0) \Mmat \avec(\varphi_0)-2\left(\beta_{0,R}\right)^2,\\
L_6 \triangleq & -L_{3,R} + \avec^H(\varphi_0) \Mmat \avec(\varphi_0)-2\left(\beta_{0,I}\right)^2,
    \end{split}
\end{align}
and the second-order moment and second-order pseudo-moment matrices of the quantized measurements vector, $\zvec$, are denoted by
$\Mmat \triangleq \E\left[\zvec \zvec^H\right]$ and $\Pmat \triangleq \E\left[\zvec \zvec^T\right]$, respectively.  Substitution of (\ref{eq: A theta}) and (\ref{eq: B theta}) in (\ref{eq: MCRB}) followed by few simple calculations form the \ac{MCRB} for the parameter-of-interest, $\varphi$, with one-bit quantized data
\begin{align}
    \label{eq: DOA MCRB}
    \begin{split}
&\left[\mathbf{MCRB}(\thetavec)\right]_{1,1} =\\
&\frac{L_1 +2 \Re\left(J_2 L^{*}_2\right)+ L_6 J^{2}_{2,I} + L_5 J^{2}_{2,R} + 2 L_4 J_{2,I} J_{2,R}}{2\left(\left|J_2\right|^2 + J_1\right)^2 }.
    \end{split}
\end{align}
Finally, the corresponding bound on the \ac{MSE} of the estimator $\hat{\varphi}(\zvec)$ of $\varphi$ is given by using (\ref{eq: DOA estimation bias}) and (\ref{eq: DOA MCRB}) \cite{Stefano3}
\begin{align}
    \label{eq: MSE bound}
    \begin{split}
MSE\left(\hat{\varphi}(\zvec)\right) \triangleq& \E \left[\left(\hat{\varphi}(\zvec)-\varphi\right)^2\right] \geq \\
&\left[\mathbf{MCRB}(\thetavec)\right]_{1,1} +  b^2_1\left(\hat{\thetavec},\thetavec\right).    
    \end{split}
\end{align}
For evaluating the \ac{MSE} bound in (\ref{eq: MSE bound}), explicit expressions for the matrices $\Mmat$ and $\Pmat$ are required. In Subsection \ref{sec: MCRB for Oversampling quantized data} we derive the \ac{MCRB} for oversampled one-bit quantized measurements. Oversampling in a frequency larger than the Nyquist rate results in colored noise. Thus, in the following subsections, we analyze the elements of the matrices $\Mmat$ and $\Pmat$ for the general case of colored noise, $\vvec$, with covariance matrix $\Rmat$, and the special case where the noise, $\vvec$, is \ac{AWGN}.

\subsubsection{\ac{MCRB} for One-Bit Quantized Data in Colored Noise}

\quad
The elements $\left[\Mmat\right]_{i,l}$ and $\left[\Pmat\right]_{i,l}$ are given by
\begin{align}
    \label{eq: M def}\left[\Mmat\right]_{i,l} \triangleq \E 
    \left[z_i z^*_l\right],\quad i,l=1,\dots,N 
\end{align}
and 

\begin{align}
     \label{eq: P def}\left[\Pmat\right]_{i,l} \triangleq \E 
    \left[z_i z_l\right],\quad i,l=1,\dots,N,  
\end{align} respectively.
Evaluating (\ref{eq: M def}) and (\ref{eq: P def}) for $i=l$ while using the statistics of quantized random variables in (\ref{eq: mu n})-(\ref{eq: z abs and squared}) in Appendix A, yields 
\begin{align}\label{eq: second moment and pseudo-second moment 1}
\begin{split}
\left[\Mmat\right]_{i,i} &= \E\left[\left|z_i\right|^2\right]=2
\end{split}    
\end{align}
and 
\begin{align}
\begin{split}
\label{eq: second moment and pseudo-second moment 2}\left[\Pmat\right]_{i,i} &= \E\left[z^2_i\right]= 2j \cdot \mu_{i,R}(\thetavec) \mu_{i,I}(\thetavec).
    \end{split}
\end{align}
Evaluating the elements (\ref{eq: M def}) and (\ref{eq: P def}) for $i \neq l$ gives 
\begin{align}
\label{eq: expectation z_i z_l*}
\begin{split}
\left[\Mmat\right]_{i,l} &=  \sum_{
\small{\begin{matrix}
z_i,z_l\in\{\pm 1 \pm j\}, \\
   k_i,k_l=0,\dots,3
\end{matrix}
}
} P^{(k_i,k_l)}_{i,l}      z_i z_l^*,
 \end{split}
\end{align}
and 
\begin{align}
\label{eq: expectation z_i z_l}
\begin{split}
\left[\Pmat\right]_{i,l} & = \sum_{
\small{\begin{matrix}
z_i,z_l\in\{\pm 1 \pm j\}, \\
   k_i,k_l=0,\dots,3
\end{matrix}
}
}P^{(k_i,k_l)}_{i,l}      z_i z_l,
 \end{split}
\end{align}
where the \ac{PMF}s of $\zvec_{i,l}\triangleq \left[z_i,z_l\right]^T$ are denoted by $P^{(k_i,k_l)}_{i,l}, \quad k_i,k_l=0,\dots,3$ and obtained from the quantization rule (\ref{eq: z_n}) and the events described in (\ref{eq: z elements events}) for $z_i$ and $z_l$:
\begin{align}
\label{eq: z_i,l PMF}
\begin{split}
P^{(k_i,k_l)}_{i,l} = Prob\left(\omega^{(k_i)}_{i} \cap \omega^{(k_l)}_{l} \right),\quad k_i,k_l&=0,\dots,3,\\
i,l &=1,\dots,N.  
\end{split}
\end{align}
Let the vector $\xvec_{i,l}$ be defined as
\begin{align} \label{eq: x bar i,l}
\xvec_{i,l} \triangleq \left[x_{i,R},x_{i,I},x_{l,R},x_{l,I}\right]^T,\quad  i,l=1,\dots,N.
\end{align}
According to the model in (\ref{eq: x model}), $\xvec_{i,l} \sim \mathcal{N}\left(\uvec_{i,l},\Cmat_{i,l} \right)$, where
\begin{align} \label{eq: x bar i,l mean}
\begin{split}
    \uvec_{i,l} &\triangleq  \left[s_{i,R}(\thetavec),s_{i,I}(\thetavec),s_{l,R}(\thetavec),s_{l,I}(\thetavec)\right]^T
\end{split}
\end{align}
and
% }
\begin{align} \label{eq: C i,l final}
    \begin{split}
\Cmat_{i,l} \triangleq& \frac{1}{2} \begin{bmatrix}
     \sigma^2_i & 0 &\rho_{i,l} & 0\\
     0 & \sigma^2_i &0 &\rho_{i,l} \\
      \rho_{i,l}& 0 & \sigma^2_l& 0\\
      0& \rho_{i,l} &0 &\sigma^2_l \\
\end{bmatrix}
    \end{split}
\end{align}
are the expectation vector and the covariance matrix of $\xvec_{i,l}$, respectively, where (\ref{eq: C i,l final}) is obtained from properties of the circularly symmetric complex Gaussian distribution with covariance matrix elements in (\ref{eq: R elements}).
The set of 16 probabilities defined in (\ref{eq: z_i,l PMF}) is the quadrivariate set of non-centered orthant probability
functions of $\xvec_{i,l}$ \cite{10.5555/1695822,ef984e50-3519-38ce-adbf-ad8274fdedba,9b2f3c1f-6cc8-3a07-8891-81ab55896b08,9fea1a2a-fee6-329f-a9d3-78a7cb246e6b,19990306}. Evaluation of the set in (\ref{eq: z_i,l PMF}) requires computing the four dimensional integrals 
\begin{align}\label{eq: P i,l integral}
\begin{split}
    P^{(k_i,k_l)}_{i,l} = \int_{B^{(k_i,k_l)}_{i,l}} \phi\left(\xvec_{i,l},\uvec_{i,l},\Cmat_{i,l}\right)&d\xvec_{i,l},\\ 
    k_i,k_l =& 0,\dots,3,
\end{split}
\end{align}
where $\phi\left(\xvec,\uvec,\Cmat\right)$ is the quadrivariate Gaussian \ac{PDF} and the integration region $B^{(k_i,k_l)}_{i,l}$ is defined by 
\begin{align} \label{eq: B i,l}
   B^{(k_i,k_l)}_{i,l} \triangleq& \{\xvec_{i,l}\in \mathbb{R}^4|\omega^{(k_i)}_{i} \cap \omega^{(k_l)}_{l}  \} .
\end{align} 
Although there is no closed-form expression for the integral in the \ac{r.h.s.} of (\ref{eq: P i,l integral}), numerical integration methods like Quasi Monte-Carlo methods \cite{10.5555/1695822,2df842b9-0d57-33ab-9a87-b36efd47c710} can be
applied to approximate it. Substitution of the approximations of (\ref{eq: P i,l integral}) in (\ref{eq: expectation z_i z_l*}) and (\ref{eq: expectation z_i z_l}) gives the elements $\left[\Mmat\right]_{i,l}$ and $\left[\Pmat\right]_{i,l}$ for $i\neq l,\quad i,l=1,\dots,N$, respectively.

% The derivation of the \ac{CRB} that accounts for colored noise and quantization is quite complex \cite{1039405}. For a complex-valued quantized vector of size $N$, deriving a closed-form expression for the orthant probabilities of multivariate Gaussian vector of size $2N$ is required. However, a closed-form expression for the orthant probabilities does not exist for orders larger than four. A notable advantage of the \ac{MCRB} as a tool for analyzing the performance and limitations of one-bit
% quantization, is that it does not rely on closed-form expressions for the orthant probabilities. Computations of the matrices $\Mmat$ and $\Pmat$ require evaluation of orthant probabilities of order four, which is applicable by numerical integration.    
\subsubsection{\ac{MCRB} for One-Bit Quantized Data in \ac{AWGN}}

\quad \quad \quad\quad
For the special case of \ac{AWGN} where $\Rmat=\sigma^2\Imat_N$, the covariance matrix in (\ref{eq: C i,l final}) resides to
\begin{align} \label{eq: C i,l uncorrelated}
\Cmat_{i,l}=\frac{1}{2} \sigma^2 \Imat_4, \quad i,l=1,\dots,N.
\end{align}
Therefore, the elements of $\xvec_{i,l}$ for $i \neq l$ are statistically independent, since $\xvec_{i,l}$ is Gaussian, and the probability in (\ref{eq: P i,l integral}) can be expressed by the product of marginal probabilities of $\xvec_{i,l}$. After few lines of algebra, it can be shown that 
\begin{align} \label{eq: P i,l uncorrelated}
\begin{split}
    P^{(k_i,k_l)}_{i,l} 
    =p_{z_{i,R};\thetavec}(z_{i,R};\thetavec) &p_{z_{i,I};\thetavec}(z_{i,I};\thetavec)\\ \times p_{z_{l,R};\thetavec}(z_{l,R};\thetavec)&p_{z_{l,I};\thetavec}(z_{l,I};\thetavec), \\
    z_i,z_l &\in \{\pm 1 \pm j\},
    \end{split}
\end{align}
By substituting (\ref{eq: P i,l uncorrelated}) into (\ref{eq: expectation z_i z_l*}) and (\ref{eq: expectation z_i z_l}) one obtains
\begin{align} \label{eq: Mmat i neq l uncorrelated}
    \begin{split}
\left[\Mmat\right]_{i,l} =&  \sum_{
\small{\begin{matrix}
z_i\in\{\pm 1 \pm j\}
\end{matrix}
}
} p_{z_{i,R};\thetavec}(z_{i,R};\thetavec) p_{z_{i,I};\thetavec}(z_{i,I};\thetavec)   z_i \\
&\times \sum_{
\small{\begin{matrix}
z_l\in\{\pm 1 \pm j\}
\end{matrix}
}
} p_{z_{l,R};\thetavec}(z_{l,R};\thetavec)p_{z_{l,I};\thetavec}(z_{l,I};\thetavec)      z_l^*\\
=& \E\left[z_i\right] \E\left[z^*_l\right]= \mu_i(\thetavec) \mu^*_l(\thetavec), \quad i\neq l,
 \end{split}
\end{align}
and
\begin{align}
    \label{eq: Pmat i neq l uncorrelated}
    \begin{split}
\left[\Pmat\right]_{i,l} 
=& \E\left[z_i\right] \E\left[z_l\right] = \mu_i(\thetavec) \mu_l(\thetavec),\quad i\neq l.
 \end{split}
\end{align}
The diagonal elements of $\Mmat$ and $\Pmat$ are given by (\ref{eq: second moment and pseudo-second moment 1}) and (\ref{eq: second moment and pseudo-second moment 2}), respectively.
\subsection{MCRB for Oversampled Quantized Data}\label{sec: MCRB for Oversampling quantized data}
The derived \ac{MCRB} can be used not only for performance evaluation in the case of one-bit quantization, but also to evaluated and analyze potential performance improvement through oversampling. Nyquist rate sampling of a deterministic band-limited signal with \ac{AWGN} and low-pass filter, followed by one-bit quantization is expected to produce statistically independent observations. However, statistically independence is not preserved in the case of oversampling. 
An overview on sampling continuous-time signals can be found in \cite{10.5555/248702}. The covariance matrix elements of a band-limited continuous time signal in \ac{AWGN} are \cite{7953006}
\begin{align} \label{eq: covarinace sinc}
    \left[\Rmat\right]_{n,m} = \sigma^2 \cdot \text{sinc}\left(2B T_s \left(n-m\right) \right), \quad n,m=1,\dots,N,
\end{align}
where $T_s$ and $f_s=\frac{1}{T_s}$ are the sampling 
time interval and frequency, respectively,
$B$ is the bandwidth of the signal, and 
\begin{align}
    \text{sinc}(x) \triangleq \frac{\sin(\pi x)}{\pi x}.
\end{align} 

If the sampling frequency is exactly equal to the Nyquist rate $f_s=2B$, then (\ref{eq: covarinace sinc}) resides to $\Rmat=\sigma^2 \Imat_N$ and the noise vector $\vvec$ is white. In order to obtain the \ac{MCRB} for parameter estimation with oversampled quantized measurements where $f_s > 2B$, the elements of the matrices $\Mmat$ and $\Pmat$ in (\ref{eq: M def}) and (\ref{eq: P def}) should be evaluated with the covariance matrix in (\ref{eq: covarinace sinc}). Substitution of the resulting $\Mmat$ and $\Pmat$ matrices in (\ref{eq: A and B elements}), followed by their substitution in (\ref{eq: DOA MCRB}) and (\ref{eq: MSE bound}) gives the \ac{MCRB} and \ac{MSE} bound for estimation with oversampled quantized measurements, respectively.
% Let's define an oversampling factor $U \triangleq \frac{f_s}{f_{Nyquist}}$, where the Nyquist rate is $f_{Nyquist}=2B$. 

In Section \ref{sec: Simulations}, we evaluate the MSE bounds for one-bit quantized data and show that oversampling may partially compensate for the lost information due to quantization. A similar trade-off exists in terms of computational complexity. While processing of one-bit quantized data is simpler, increasing the sampling rate may involve higher amount of computations. The question that can be raised is whether the computational complexity of the conventional estimator with quantized measurements is reduced compared to the conventional estimator with finite number of bits per measurement.
In the next section, the estimation procedures for the models in (\ref{eq: x model}) and (\ref{eq: z_n}) are presented, and their computational costs are discussed.

\section{Computational Complexity of One-Bit Quantized and Oversampled Measurements}
\label{sec: algorithms}
In this section, we analyze the computational complexity of the misspecified \ac{ML} estimator based on one-bit and fine quantization measurements in terms of the number of real operations: multiplication, addition, or subtraction, $C_{op}$. The algorithm based on one-bit measurements is shown to be more efficient than the fine quantization algorithm, in terms of computational complexity. Moreover, the influence of oversampling on the computational complexity is presented. 

The \ac{ML} estimator under the assumed model
in (\ref{eq: z assumed PDF}) with non-quantized measurements
is given by \cite{van2004optimum}
% The misspecified \ac{ML} estimators under the assumed model in  with the non-quantized (\ref{eq: x model}) and quantized (\ref{eq: z_n}) measurements are given by 
% \begin{align} \label{eq: MML z}
%     \hat{\varphi}(\zvec) \triangleq \arg \max_{\varphi'} \left| c\left(\avec(\varphi'),\zvec\right) \right|^2,
% \end{align} 
\begin{align} \label{eq: MML x}
    \hat{\varphi}(\xvec) \triangleq \arg \max_{\varphi'} \left| c\left(\avec(\varphi'),\xvec\right) \right|^2,
\end{align} 
where the correlator $c: \mathbb{C}^N \times \mathbb{C}^N \rightarrow \mathbb{C} $ is defined by
\begin{align} \label{eq: correlator}
c\left(\hvec,\uvec\right) \triangleq \hvec^H \uvec.
\end{align}
The same estimator is used in the quantized case where the input is substituted by $\zvec$: $\hat{\varphi}(\zvec)$. 
\begin{prop} 
% (Computational complexity of fine quantization and one-bit algorithms)

Assume the misspecified \ac{ML} estimators based on fine quantization and one-bit quantized measurements are evaluated with grid parameter $\varphi'$ and $K$ possible values of $\varphi'$. 
Let $C_{\zvec}$ and $C_{\xvec}$ define the computational complexity of the estimator with one-bit and fine quantization, respectively. Then, the reduction in the computational complexity for using one-bit measurements is 
\begin{align} \label{eq: CU approx}
    \lim_{N\rightarrow \infty} \frac{C_{\zvec}}{C_{\xvec}} = \frac{1}{4}
\end{align}
\end{prop}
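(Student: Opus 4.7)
The plan is to count, for each estimator, the total number $C_{op}$ of real operations used when the grid search is evaluated at the $K$ candidate values of $\varphi'$, and then to take the ratio in the limit $N\to\infty$. Since both procedures have the same grid-search skeleton, $C_{\xvec}=K\cdot c^{(1)}_{\xvec}+O(1)$ and $C_{\zvec}=K\cdot c^{(1)}_{\zvec}+O(1)$, where $c^{(1)}_{\yvec}$ is the per-grid-point cost of evaluating $|c(\avec(\varphi'),\yvec)|^2$. The common factor $K$ cancels in the ratio, so the entire argument reduces to tallying the two per-grid-point operation counts.

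For the fine-quantization case $\yvec=\xvec\in\mathbb{C}^N$, the inner product $\avec^H(\varphi')\xvec=\sum_{n=1}^N a_n^*(\varphi')x_n$ uses $N$ complex multiplications (each counted as four real multiplications plus two real additions), $N-1$ complex additions (each counted as two real additions), and the final squared-magnitude step $|c|^2=c_R^2+c_I^2$. For the one-bit case $\yvec=\zvec\in\{\pm 1\pm j\}^N$, I would exploit that each $z_n$ takes only four possible values, so that $a_n^*(\varphi')z_n$ equals $\pm s_n\pm j d_n$ or $\pm d_n\pm j s_n$ with $s_n=a_{n,R}(\varphi')+a_{n,I}(\varphi')$ and $d_n=a_{n,R}(\varphi')-a_{n,I}(\varphi')$; forming the $N$ products therefore requires no real multiplications, only sign-choices followed by the $N-1$ accumulations. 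Aggregating each case yields $c^{(1)}_{\xvec}=\alpha_{\xvec}N+O(1)$ and $c^{(1)}_{\zvec}=\alpha_{\zvec}N+O(1)$, so $\lim_{N\to\infty}C_{\zvec}/C_{\xvec}=\alpha_{\zvec}/\alpha_{\xvec}$, and a direct arithmetic tally under the paper's convention produces the claimed value $\tfrac{1}{4}$.

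The main obstacle is simply the operation-by-operation bookkeeping, and in particular the convention for multiplications by $\pm 1$, which must be treated as free sign flips (rather than as real multiplications) in order to recover the stated constant. Once the convention is fixed, the remainder is a clean asymptotic comparison: the $O(1)$ contributions from $|c|^2$ and from the boundary terms of the summations are dominated by the $\Theta(N)$ leading terms and drop out in the $N\to\infty$ limit, so only the leading coefficients $\alpha_{\xvec}$ and $\alpha_{\zvec}$ matter in the final ratio.
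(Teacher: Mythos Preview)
Your proposal is correct and follows essentially the same route as the paper: decompose $a_n^*(\varphi')z_n$ in terms of $s_n=a_{n,R}+a_{n,I}$ and $d_n=a_{n,R}-a_{n,I}$, count the fine-quantization inner product at $8N+O(1)$ real operations per grid point versus $2N+O(1)$ for the one-bit case, and take the limit of the ratio. The one point the paper makes more explicit than you do is that the quantities $\pm s_n(\varphi'),\,\pm d_n(\varphi')$ are precomputed once in an offline lookup table (they depend only on the grid, not on the data), which is what justifies excluding their formation cost from $c^{(1)}_{\zvec}$ and obtaining $\alpha_{\zvec}=2$ rather than $4$; your ``free sign-flip'' convention alone does not cover those two additions per $n$, so you should state this precomputation assumption explicitly.
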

\begin{proof}
    The proof appears in Appendix B.
\end{proof}

For the case of over-sampled quantized measurements $\zvec$, the computational complexity increases by the oversampling factor $U\triangleq \frac{f_s}{2B}$.
Thus, the computational complexity using quantized measurements with oversampling factor $U$, increases by a factor of $\frac{U}{4}$. Accordingly, processing of over-sampled quantized data with $U<4$ is computationally more efficient than processing fine-quantized data at Nyquist rate. 
% Moreover, for over-sampled quantized measurements with oversampling factor $U=4$, the additional computational cost is negligible when $F \gg 2$.

\section{Simulations} \label{sec: Simulations}
In this section, the expected performance of misspecified one-bit quantized measurements in two signal processing applications is investigated via the \ac{MCRB} and mean-bias. In Subsection \ref{sec: DOA estimation} the problem of \ac{DOA} estimation using a sensor array is considered, and in Subsection \ref{sec: frequency estimation} the problem of frequency estimation is addressed. The \ac{MCRB} is used to evaluate the expected performance for oversampled one-bit quantized data. 

\subsection{\ac{DOA} Estimation}
\label{sec: DOA estimation}
In the problem of \ac{DOA} estimation of a narrowband far-field source, the measurement model before quantization can be described by (\ref{eq: x model}) and (\ref{eq: DOA signal vector}), where $\avec(\varphi)$ is the steering vector of direction $\varphi \in \Omega_{\varphi} \subseteq (-\frac{\pi}{2},\frac{\pi}{2})$ and $\beta \in \mathbb{C}$ is deterministic unknown complex amplitude. 
% With no loss of generality, we can assume that $\|\avec(\varphi)\|^2=1 \quad \forall \varphi \in \Omega_\varphi$, and by setting the array centroid at the origin, under far-field assumptions, $\Dot{\avec}^H(\varphi) \avec(\varphi)=0$ is obtained.
Consider a \ac{ULA} of $N=16$ sensors with half a wavelength inter-element spacing. Thus, the $n$-th element of the steering vector $\avec(\varphi)$ is given by
\begin{align} \label{eq: DOA a vec elements}  a_n(\varphi) = \frac{1}{\sqrt{N}}\exp\left[j\pi \left(n-\frac{N+1}{2}\right)\sin \varphi \right].
\end{align}
According to the discussion in Section \ref{sec: MCRB for Oversampling quantized data}, the noise vector $\vvec$ is white Gaussian.
The \ac{SNR} is defined by $SNR \triangleq \frac{\left|\beta\right|^2}{\sigma^2}$. Figs. \ref{fig 1 AF} and \ref{fig 2 MAF} depict the \ac{AF} and \ac{MAF}, respectively, versus spatial frequency  $u = \sin \varphi$ for real $\beta$, and $SNR=30$dB. 
% The \ac{AF} is shown to include main-lobe and side-lobes in reduced levels compared to main-lobe. 
The \ac{MAF} that considers model misspecification due to one-bit quantization presents several distinguished properties:
\begin{itemize}
    \item Distorted mainlobes in width and shape, compared to the \ac{AF}.
    \item Sidelobe distance from the mainlobe and their heights depend on $\sin \varphi$.
    \item Three large-valued lobe patterns that cross $\varphi'$ grid and  intersect with the mainlobe.
\end{itemize}
\begin{figure}[htbp]
    \centering
    \begin{subfigure}[b]{0.24\textwidth}
        \centering
\includegraphics[width=\textwidth]{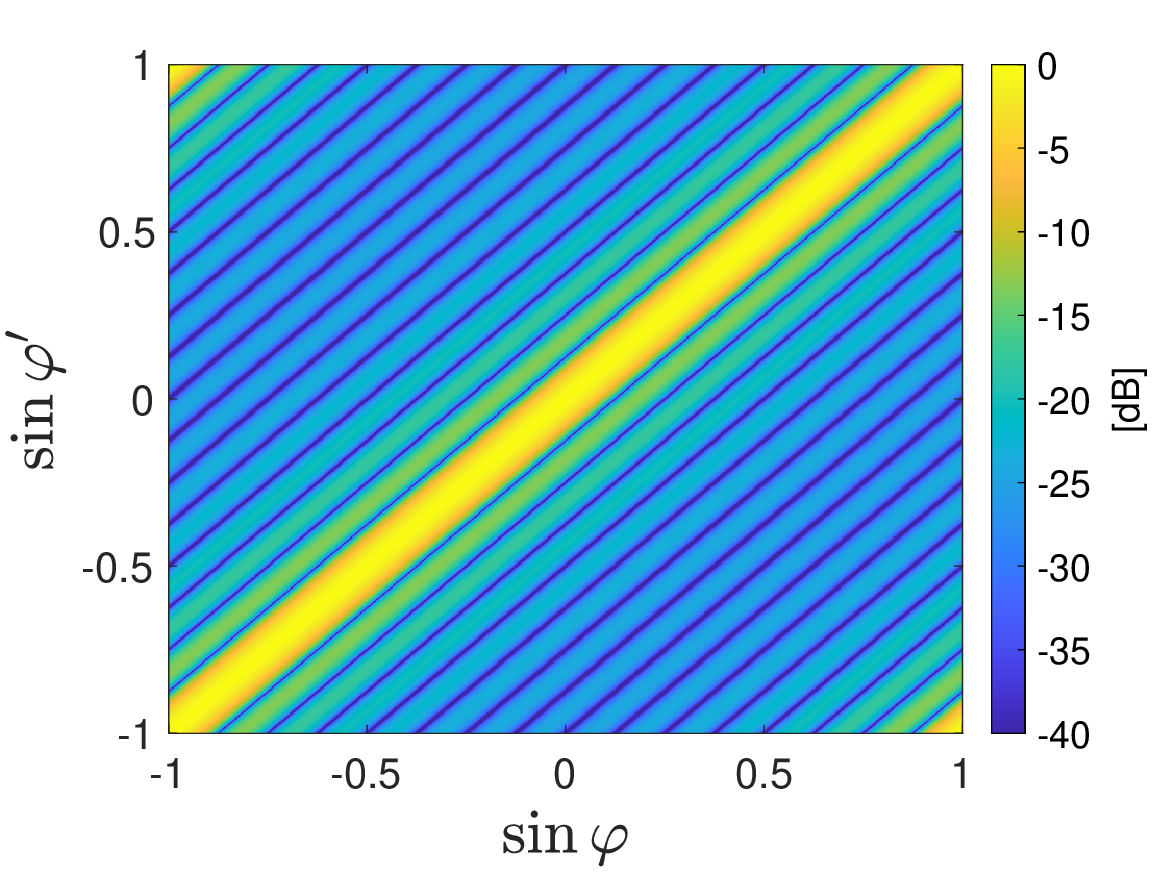}
\caption{$\left|AF\right|$}
\label{fig 1 AF}
\end{subfigure}%
% \vspace{1em}
\begin{subfigure}[b]{0.24\textwidth}
    \centering
\includegraphics[width=\textwidth]{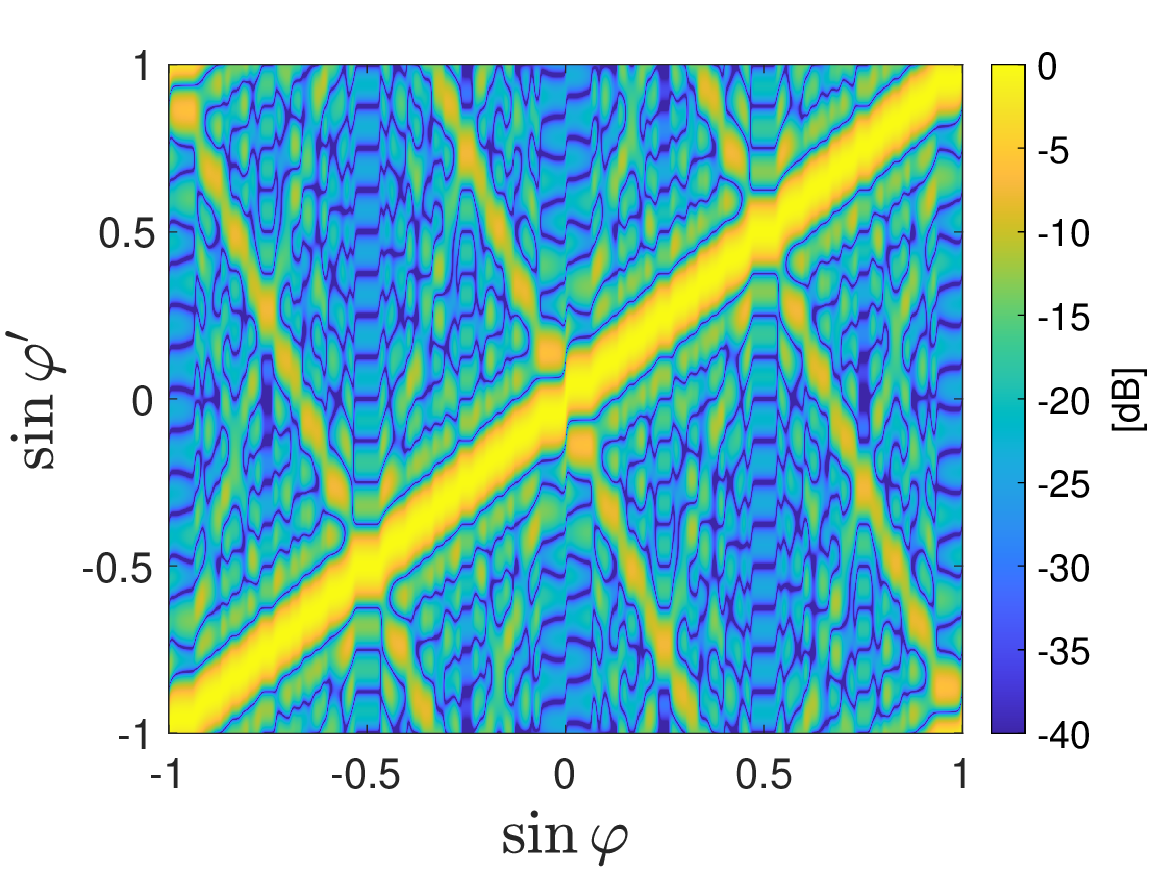} \caption{$\left|MAF\right|$}
\label{fig 2 MAF}
\end{subfigure}
\caption{Comparison between perfectly specified and misspecified ambiguity functions for \ac{DOA} estimation with \ac{ULA} of $N=16$ sensors, complex amplitude phase $\angle{\beta}=0^\circ$, and $SNR=30$dB.} \label{fig:mainfigure}
\end{figure}
Fig. \ref{fig 3 estimation bias} depicts the absolute estimation bias given in (\ref{eq: DOA estimation bias}) versus \ac{DOA} for several \ac{SNR}'s. For $SNR=10$dB the estimation bias is relatively small and it increases as the \ac{SNR} grows. For example, large estimation bias of $9^\circ$ and $1.8^\circ$ at $SNR=30$dB correspond to distorted mainlobes and intersection points in Fig. \ref{fig 2 MAF}. In those \ac{DOA}s the pseudo-true \ac{DOA}, which is the parameter that maximizes the \ac{MAF} along $\sin \varphi'$ grid, is influenced by the distorted mainlobes.
\begin{figure}[htbp]
\centerline{\includegraphics[width=0.5\textwidth]{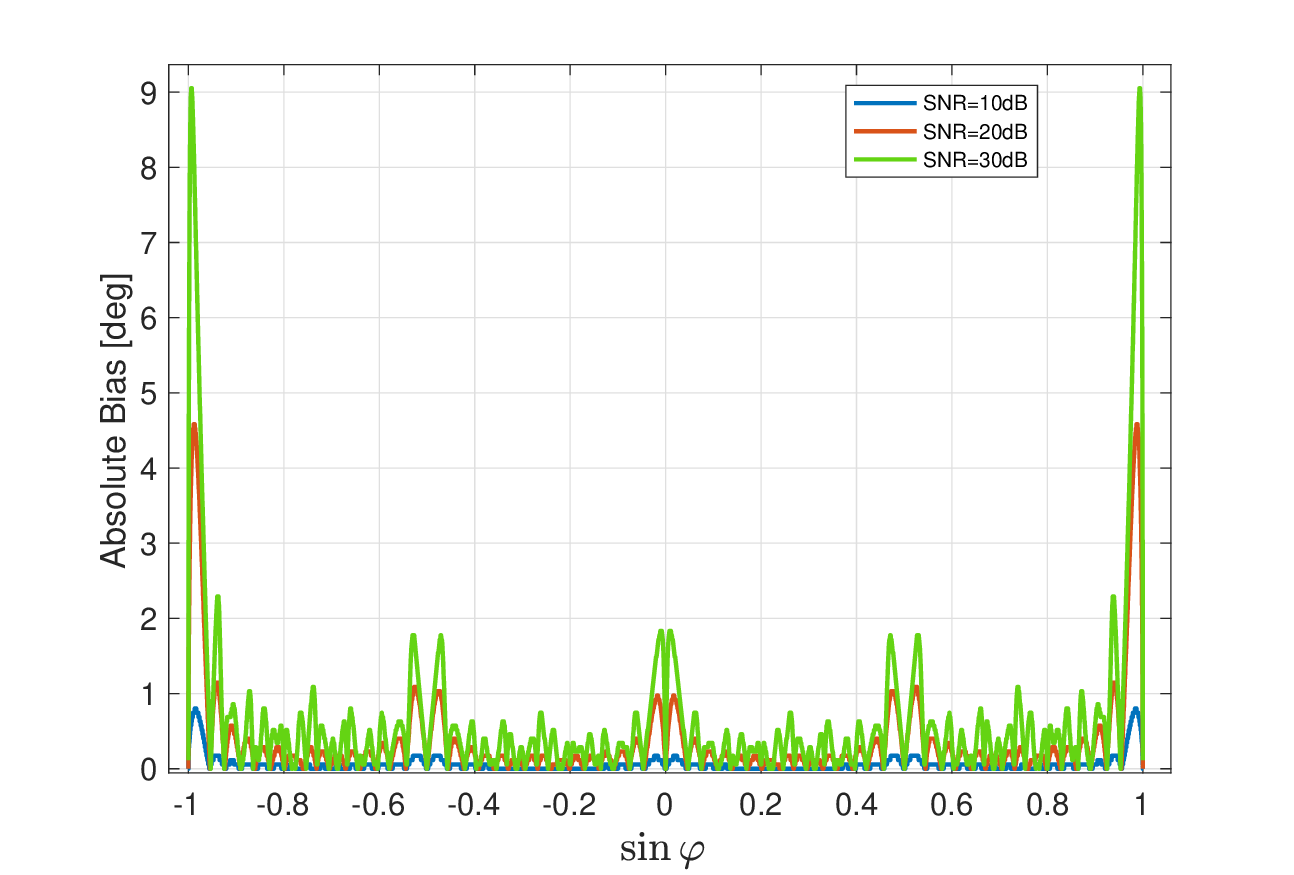}}
\caption{Absolute estimation bias versus \ac{DOA} for several \ac{SNR}'s with \ac{ULA} of $N=16$ sensors, and complex amplitude phase $\angle{\beta}=0^\circ$.}
\label{fig 3 estimation bias}
\end{figure}
The \ac{RMSE} performance of the misspecified \ac{ML} estimator $\hat{\varphi}(\zvec)$ is presented in Fig. \ref{fig 4 MCRB} along with the \ac{MSE} bound (\ref{eq: MSE bound}) (denoted as \ac{MCRB}) for \ac{DOA} $\varphi=0^\circ$, and several complex amplitude phases $\angle{\beta}$ versus \ac{SNR}. 
Moreover, the \ac{CRB} for the quantized model (\ref{eq: z_n}) is derived and presented similarly to \cite{824661,8314750} to obtain
\begin{align} \label{eq: CRB}
    \begin{split}
    CRB(\varphi) =\left[\Jmat^{-1}(\thetavec)\right]_{1,1},
    \end{split}
\end{align}
where the \ac{FIM} is given by
\begin{align}
    \begin{split}
    \Jmat(\thetavec) &= \sum_{n=1}^N \left[\psi \left(q_{n,R} \left(\thetavec\right)\right) \frac{d s_{n,R}(\thetavec)}{d \thetavec} \left(\frac{d s_{n,R}(\thetavec)}{d \thetavec}\right)^T \right.\\
        &\left.\quad\quad\,\,\;+ \psi \left(q_{n,I} \left(\thetavec\right)\right) \frac{d s_{n,I}(\thetavec)}{d \thetavec} \left(\frac{d s_{n,I}(\thetavec)}{d \thetavec}\right)^T\right],
    \end{split}
\end{align} 
in which the function $\psi(\cdot)$ is defined by
\begin{align}
\psi(q) \triangleq \frac{\exp\left[-q^2\right]}{\sigma^2 \pi \left[Q\left(q\right)-Q^2\left(q\right)\right]}.
\end{align}
\begin{figure}[htbp]
\centerline{\includegraphics[width=0.5\textwidth]{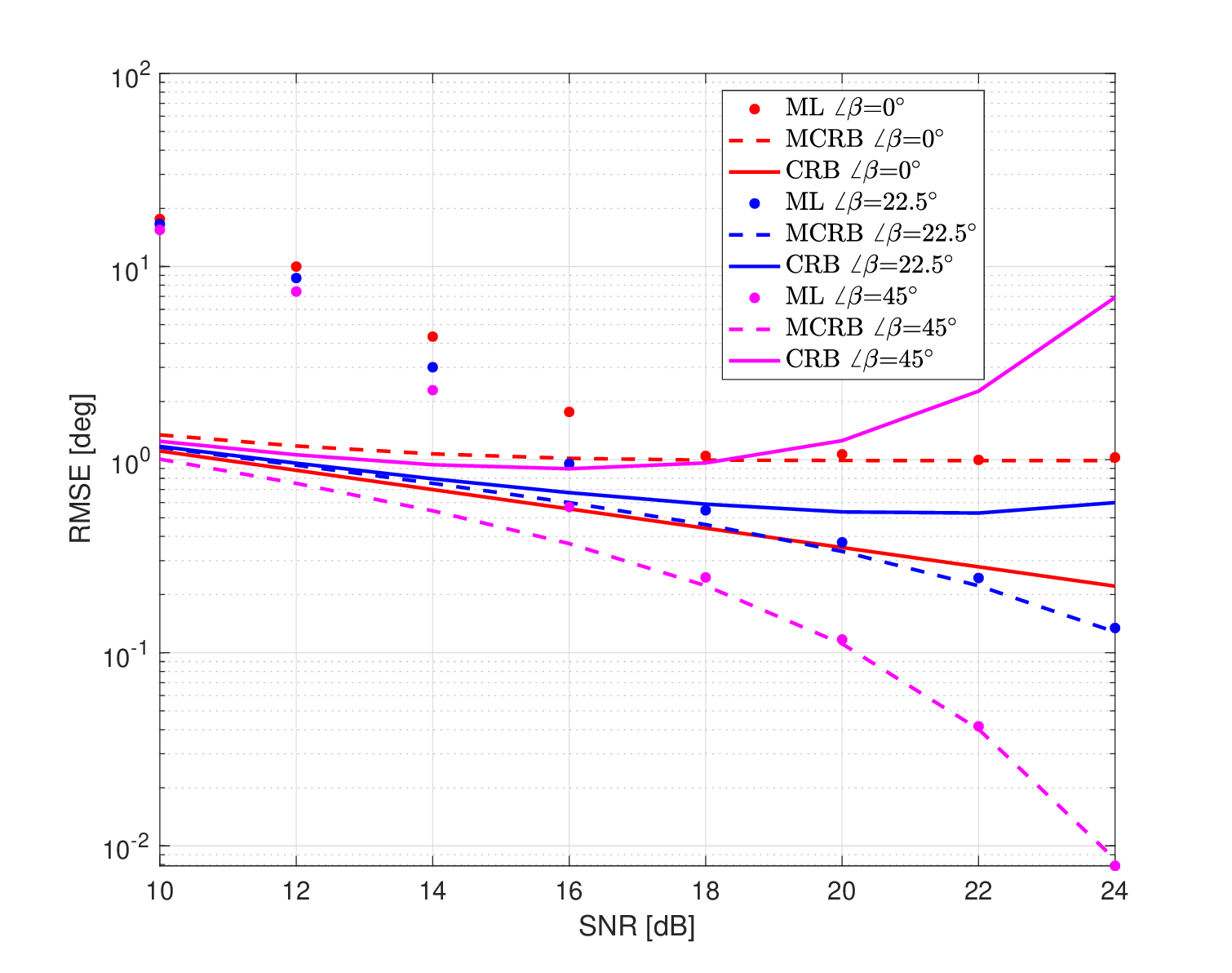}}
\caption{RMSE of misspecified \ac{ML}, \ac{MSE} bound, and \ac{CRB} versus \ac{SNR} for several complex amplitude phases, \ac{ULA} of $N=16$ sensors, and $\varphi=0^\circ$.}
\label{fig 4 MCRB}
\end{figure}
Fig. \ref{fig 4 MCRB} shows that the misspecified \ac{ML} estimator is bounded by (\ref{eq: MSE bound}), and its \ac{RMSE} approaches it for high \ac{SNR}'s.
The \ac{MCRB} for complex amplitude phase $\angle{\beta}=0^\circ$ converges to a constant value
at high \ac{SNR}s.
This phenomenon and the dependency on complex amplitude phase $\angle{\beta}$ is explained as follows.
Substituting $\varphi=0^\circ$, $\angle{\beta}=0^\circ$, and (\ref{eq: DOA signal vector})  in (\ref{eq: x model}), and evaluating the real and imaginary parts of its $n$-th entry result in
\begin{align} \label{eq: x_n real signal and theta=0}
\begin{split}
    x_{n,R} &= \beta \left[\avec(0)\right]_n+v_{n,R},\\ 
    x_{n,I} &= v_{n,I},
\end{split}
\end{align}
since $\beta \left[\avec(0)\right]_n$ is real.
Substituting (\ref{eq: x_n real signal and theta=0}) in (\ref{eq: z_n}) gives the quantized measurements
\begin{align}\label{eq: z_n real signal and theta=0}
    z_n = \text{sign}\left(\beta \left[\avec(0)\right]_n + v_{n,R} \right)+j\cdot \text{sign}\left(v_{n,I}\right),
\end{align}
as the complex amplitude and DOA terms are present in the real part and absent in the imaginary part. Thus, the imaginary part of (\ref{eq: z_n real signal and theta=0}) includes only noise, and according to (\ref{eq: z_n probabilites 2}), it is symmetric Bernoulli-distributed.
As \ac{SNR} increases in (\ref{eq: z_n real signal and theta=0}), the real part converges to $1$.
Evaluating the possible phases of the measurement $z_n$ gives $\angle{z_n} \in \left\{-45^{\circ},45^{\circ} \right\}$, where each one of the two values is obtained \ac{w.p.} $0.5$.
Since the \ac{DOA} information is induced in the measurement phases, by the uncertainly in the phase $\angle{z_n}$ one obtains constant \ac{RMSE} performance as shown in Fig. \ref{fig 4 MCRB}. 

For other complex amplitude phases like $\angle{\beta}=45^{\circ}$, following the same derivation in (\ref{eq: x_n real signal and theta=0})-(\ref{eq: z_n real signal and theta=0}) concludes that the real and imaginary parts of $z_n$ include both signal and noise parts. Thus, as \ac{SNR} increases, the measurement phase $\angle{z_n}$ converges to a specific value \ac{w.p.} $1$, and the \ac{RMSE} reduces, as shown in Fig. \ref{fig 4 MCRB}. 
Comparing the \ac{MCRB} for the misspecified scenario where the quantization is ignored to the quantized \ac{CRB} gives an interesting 
insight. For some complex amplitude phases like $\angle{\beta}=45^\circ$, using conventional algorithms which ignore the quantization, decreases the estimation errors compared to quantized estimation, while for other phases like $\angle{\beta}=0^\circ$ the estimation errors are increased. The \ac{MCRB} and \ac{CRB} differences can be explained by considering the classes of estimators that the bounds apply to. The \ac{MCRB} applies to MS-unbiased estimators, while the \ac{CRB} applies to unbiased estimators. The class of MS-unbiased estimators are generally mean-biased in the perfectly specified sense. Thus, their \ac{RMSE} performance is not bounded by the \ac{CRB}, and may be lower than the \ac{CRB}. To emphasize this phenomenon,
Fig. \ref{fig:mainfigure 2} depicts the \ac{CRB} and \ac{MCRB} including bias versus \ac{DOA} and complex amplitude phase $\angle{\beta}$ for $SNR=20$dB. Notice that both bounds depend on complex amplitude phase, are symmetric around $\angle{\beta}=45^\circ$ and periodic with period $\angle{\beta}=90^\circ$ due to the symmetry of the real and imaginary parts of $\zvec$.  
For example, for \ac{DOA}s $\varphi=0^\circ, -30^\circ, 30^\circ$, the \ac{MCRB} and \ac{CRB} increase or decrease as function of $\angle{\beta}$. 
% Though, the \ac{MCRB} set of values is smaller than those of the \ac{CRB}. Thus, it is less sensitive to complex amplitude phase changes than the \ac{CRB}. 
For these \ac{DOA}s, the \ac{CRB} reaches significantly higher values than the \ac{MCRB} for most complex amplitude phases. Thus, surprisingly, ignoring the quantization in estimation can decrease the estimation errors.
\begin{figure}[htbp]
    \centering
    \begin{subfigure}{0.5\textwidth}
        \centering
\includegraphics[width=\textwidth]{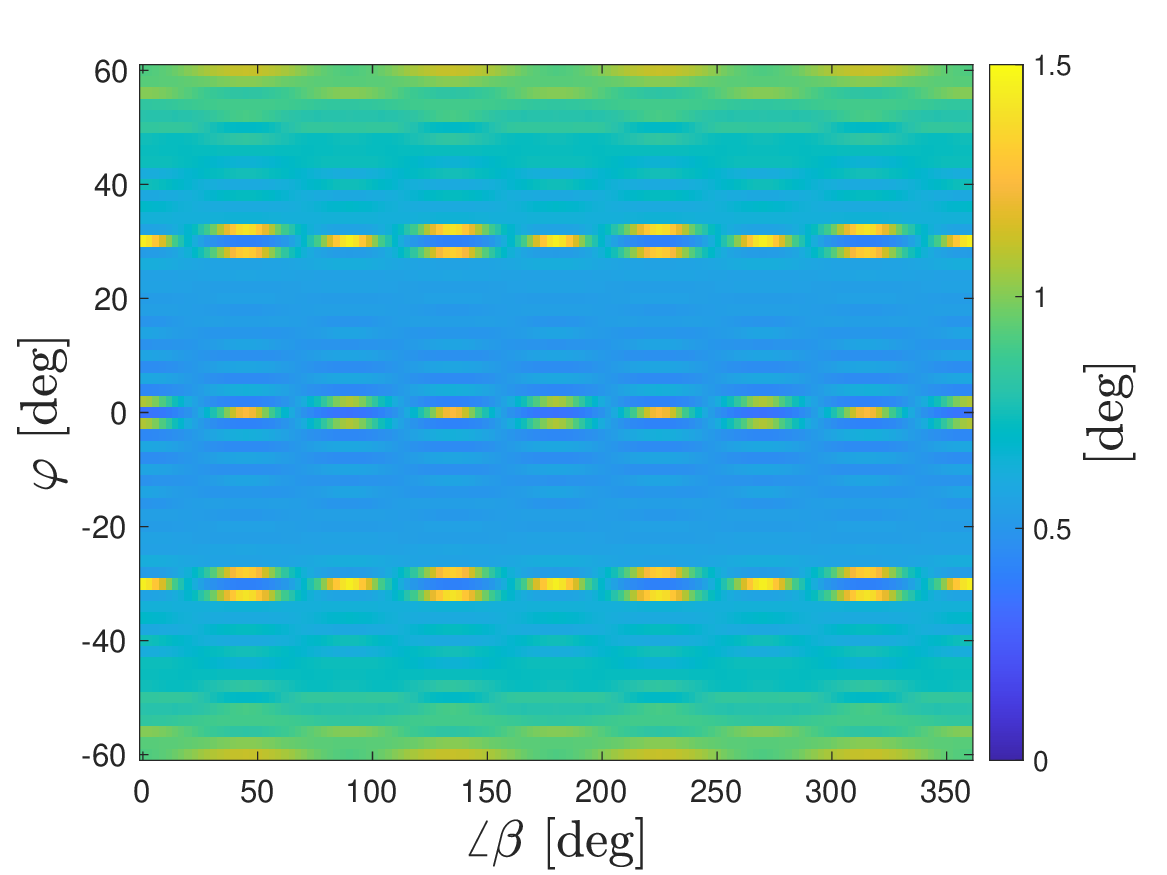}
\caption{\ac{CRB}}
\label{fig 5 CRB}
\end{subfigure}%
\vspace{0.5cm}
\begin{subfigure}{0.5\textwidth}
    \centering
\includegraphics[width=\textwidth]{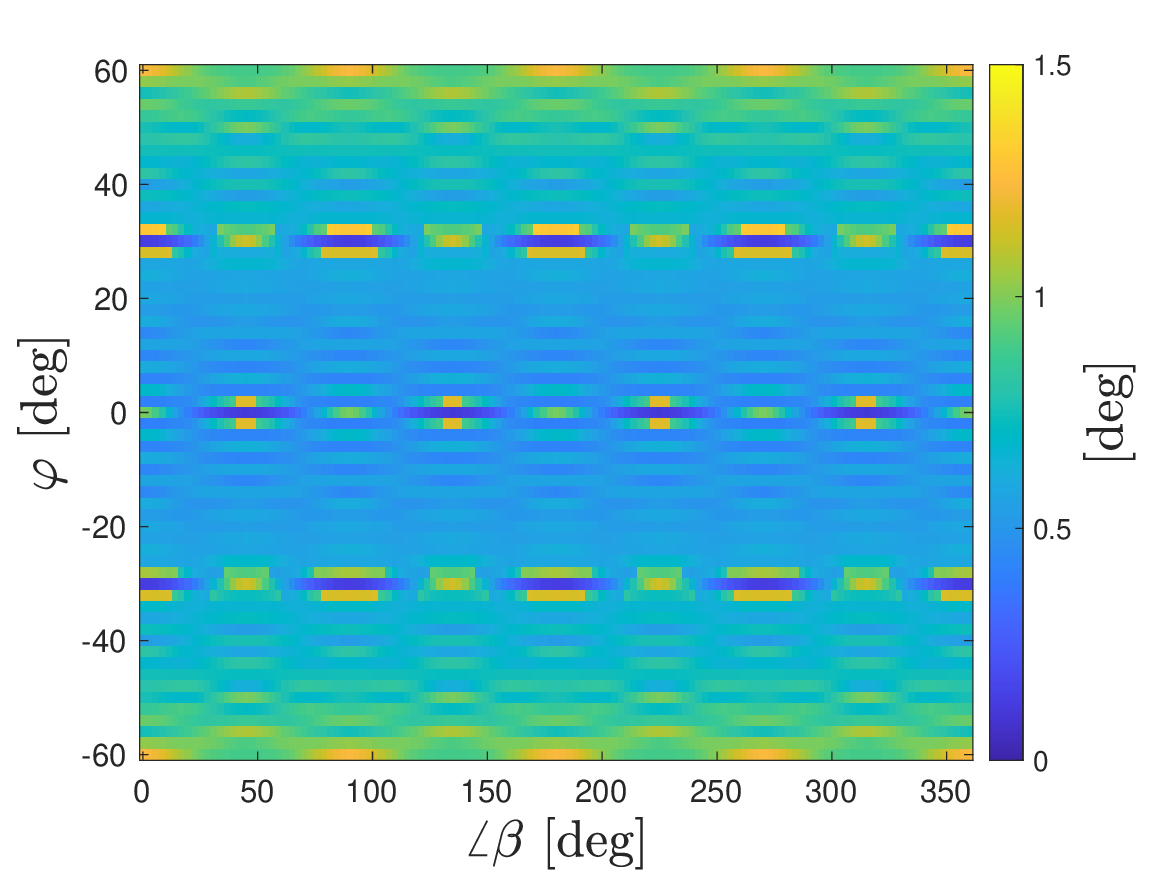}
\caption{\ac{MCRB} including bias}
\label{fig 5 MCRB}
\end{subfigure}
\caption{Comparison of the \ac{CRB} and the \ac{MCRB} including bias contribution versus \ac{DOA} and complex amplitude phase $\angle{\beta}$ for \ac{ULA} of $N=16$ sensors, and $SNR=20$dB.}
    \label{fig:mainfigure 2}
\end{figure}

\subsection{Oversampling in Frequency Estimation}
\label{sec: frequency estimation}
In the problem of frequency estimation, the measurements model before quantization can be described by (\ref{eq: x model}) and (\ref{eq: DOA signal vector}), where the elements of the vector $\avec(\varphi)$ are
\begin{align}
    \label{eq: a frequency}
    \left[\avec(\varphi)\right]_n = \frac{1}{\sqrt{N}} \exp\left[j 2 \pi \varphi t_n\right], \quad n=1,\dots,N,
\end{align}
$\varphi \in \Omega_{\varphi} \subseteq [0,2.5\text{KHz})$ is the frequency, and $\{t_n\}$ are the time samples
\begin{align}
    t_n = t_1 + (n-1) T_s, \quad n=1,\dots,N.
\end{align}
The time samples are symmetric around $0$ \ac{s.t.} $t_1=-\frac{T}{2}$, $t_N=\frac{T}{2}$ where $T$ is the observation interval, and the assumption $\Dot{\avec}^H(\varphi) \avec(\varphi)=0, \quad \forall \varphi \in \Omega_\varphi$ is satisfied. 
% The input signal (\ref{eq: a frequency}) bandwidth is $B=\varphi$. 
As the signal frequency is unknown, we specify that the ideal low-pass filter bandwidth is $B=2.5$KHz. The following figures illustrate the estimation performance of one-bit quantized data after oversampling. 
We denote the oversampling factor by $U\triangleq \frac{f_s}{B}$. 

Figs. \ref{fig 1 AF freq} and \ref{fig 2 MAF freq} depict the \ac{AF} and \ac{MAF}, respectively, versus frequency $\varphi$ for sampling rate $f_s=2.5$KHz, observation interval $T=8$msec, real $\beta$, and \ac{SNR} per sample of $7$dB.
The \ac{MAF} for frequency estimation that considers the misspecification due to one-bit quantization presents similar properties to the \ac{MAF} for \ac{DOA} estimation in Fig. \ref{fig 2 MAF}.
Figs. \ref{fig 3 MAF freq} and \ref{fig 4 MAF freq} depict the \ac{MAF}s for oversampled signal with oversampling factor $U$. Compared to the \ac{MAF} in Fig. \ref{fig 2 MAF freq}, as the oversampling factor $U$ increases, the distortions in the \ac{MAF} that result from the one-bit quantization decay and vanish. For example, the three sidelobe patterns in Fig. \ref{fig 2 MAF freq} reduce to one sidelobe in Fig. \ref{fig 3 MAF freq}, and none in \ref{fig 4 MAF freq}. Thus, oversampling reduces the effects of quantization on the \ac{MAF}. Recalling the correlator $c\left(\avec(\varphi'),\zvec\right)$ and evaluating its expectation gives
\begin{align} \label{eq: correlator z expectation}
    \begin{split}
\E\left[c\left(\avec(\varphi'),\zvec\right)\right] &= \avec^H(\varphi') \muvec(\thetavec)= C \cdot MAF\left(\varphi',\thetavec\right), 
    \end{split}
\end{align}
where $C$ is a constant in $\varphi'$. 
% The misspecified \ac{ML} $\hat{\varphi}(\zvec)$ finds the grid parameter $\varphi'$ that maximizes the absolute value of the correlator $c\left(\avec(\varphi'),\zvec\right)$.
As mentioned above, the \ac{MAF} includes distortions and sidelobe patterns, which their influence is reduced with increased oversampling factor $U$. Thus, oversampling reduces the misspecified \ac{ML} $\hat{\varphi}(\zvec)$ sensitivity to sidelobes and distortions caused by quantization, and improves the estimation accuracy.

\begin{figure}[htbp]
    \centering
\begin{subfigure}[b]{0.24\textwidth}
        \centering
\includegraphics[width=\textwidth]{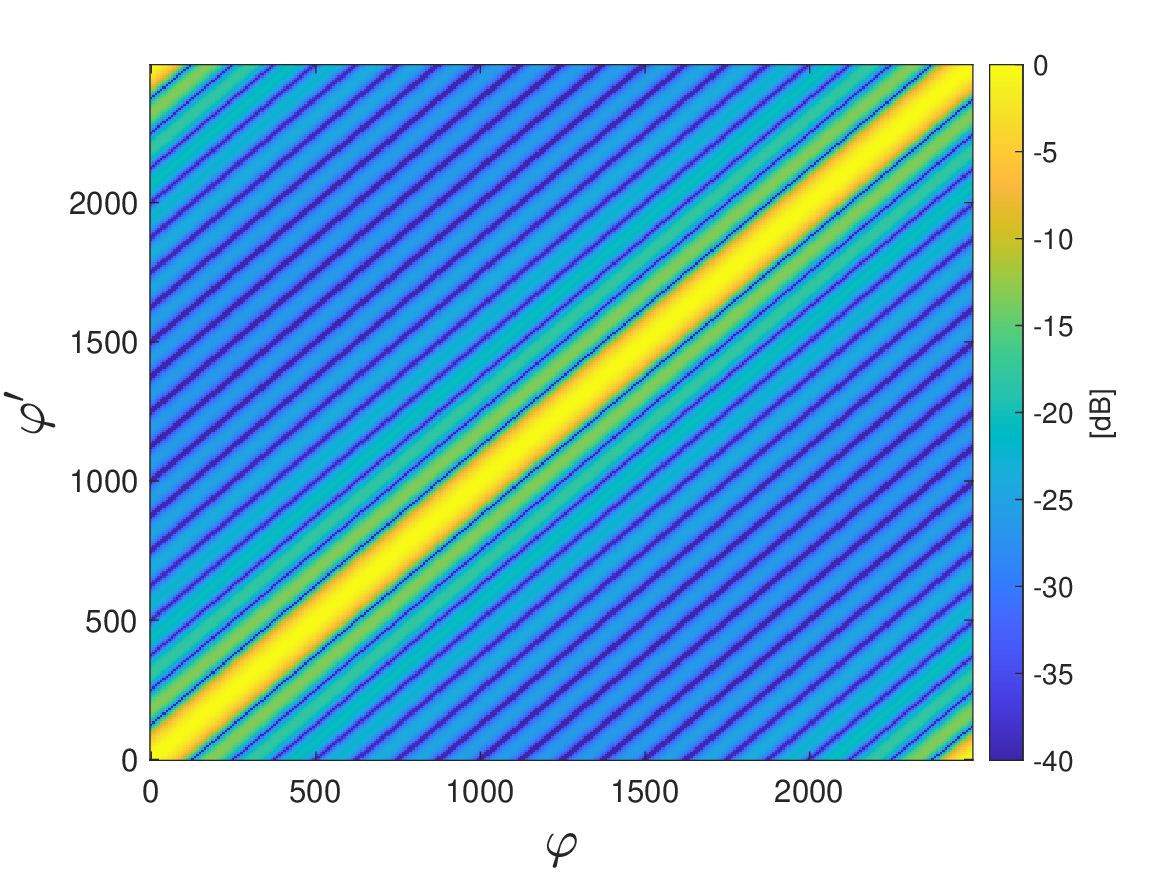}
\caption{$\left|AF\right|$}
\label{fig 1 AF freq}
\end{subfigure}%
% \vspace{1em}
\begin{subfigure}[b]{0.24\textwidth}
    \centering
\includegraphics[width=\textwidth]{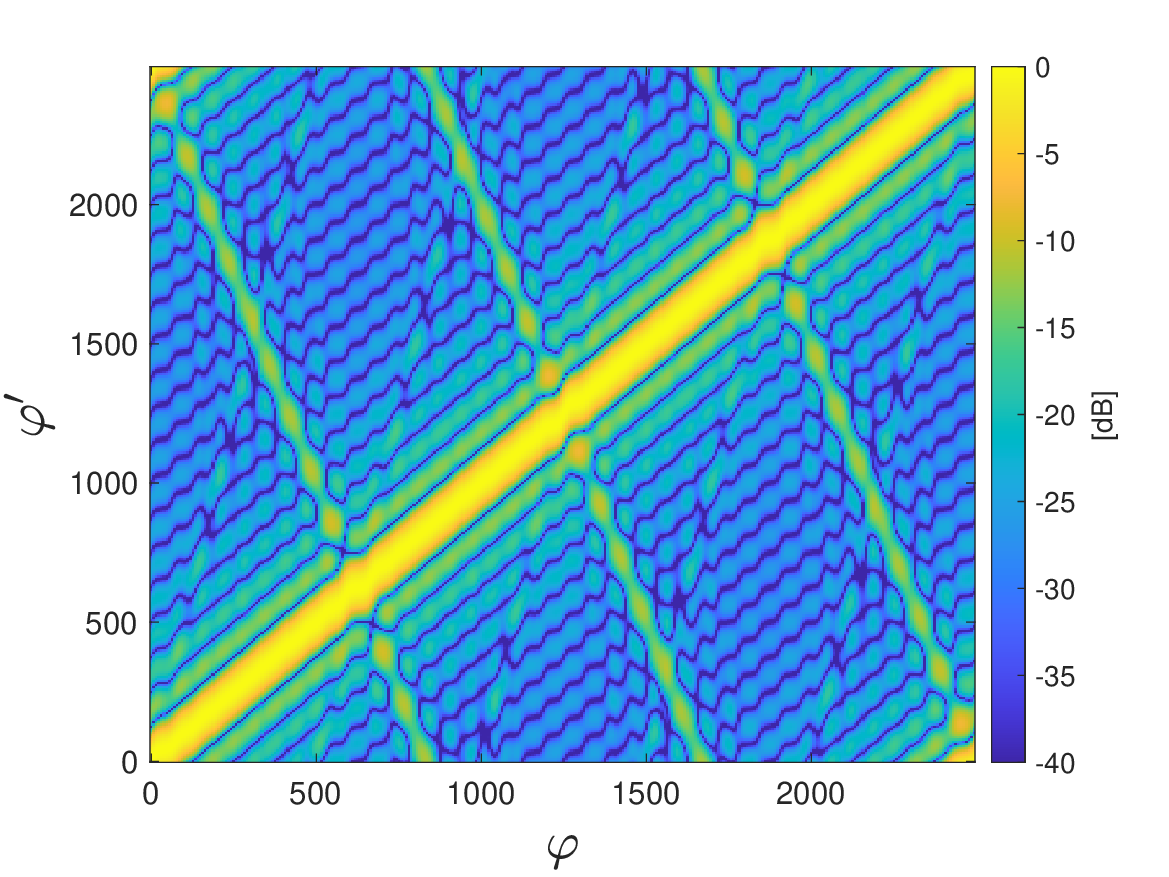} \caption{$\left|MAF\right|$}
\label{fig 2 MAF freq}
\end{subfigure}
\caption{Comparison of ambiguity functions for frequency estimation for sampling rate $f_s=2.5$KHz, observation interval $T=8$msec,  complex amplitude phase $\angle{\beta}=0^\circ$, and $SNR=7$dB per sample.} \label{fig:AF and MAF for frequency}
\end{figure}

\begin{figure}[htbp]
    \centering
\begin{subfigure}[b]{0.24\textwidth}
        \centering
\includegraphics[width=\textwidth]{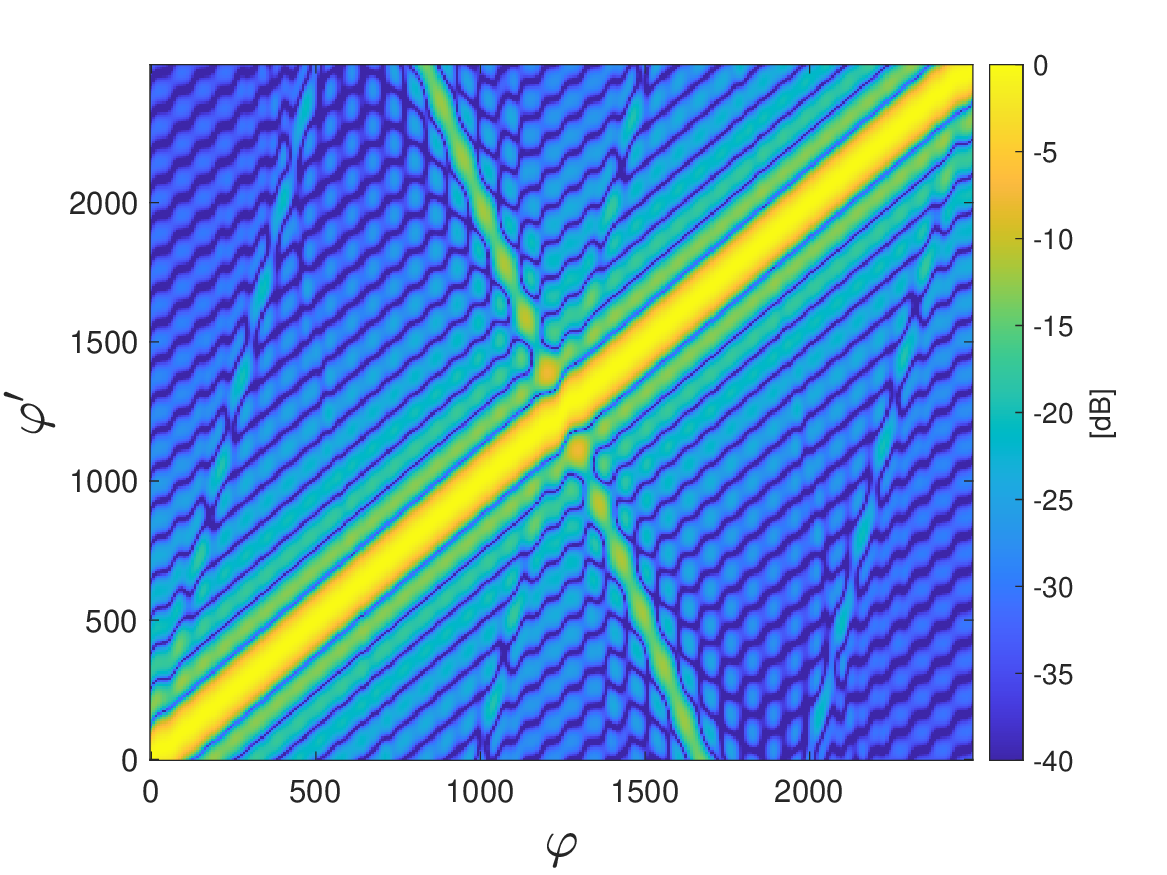}
\caption{$\left|MAF\right|, U=2$ }
\label{fig 3 MAF freq}
\end{subfigure}%
% \vspace{1em}
\begin{subfigure}[b]{0.24\textwidth}
    \centering
\includegraphics[width=\textwidth]{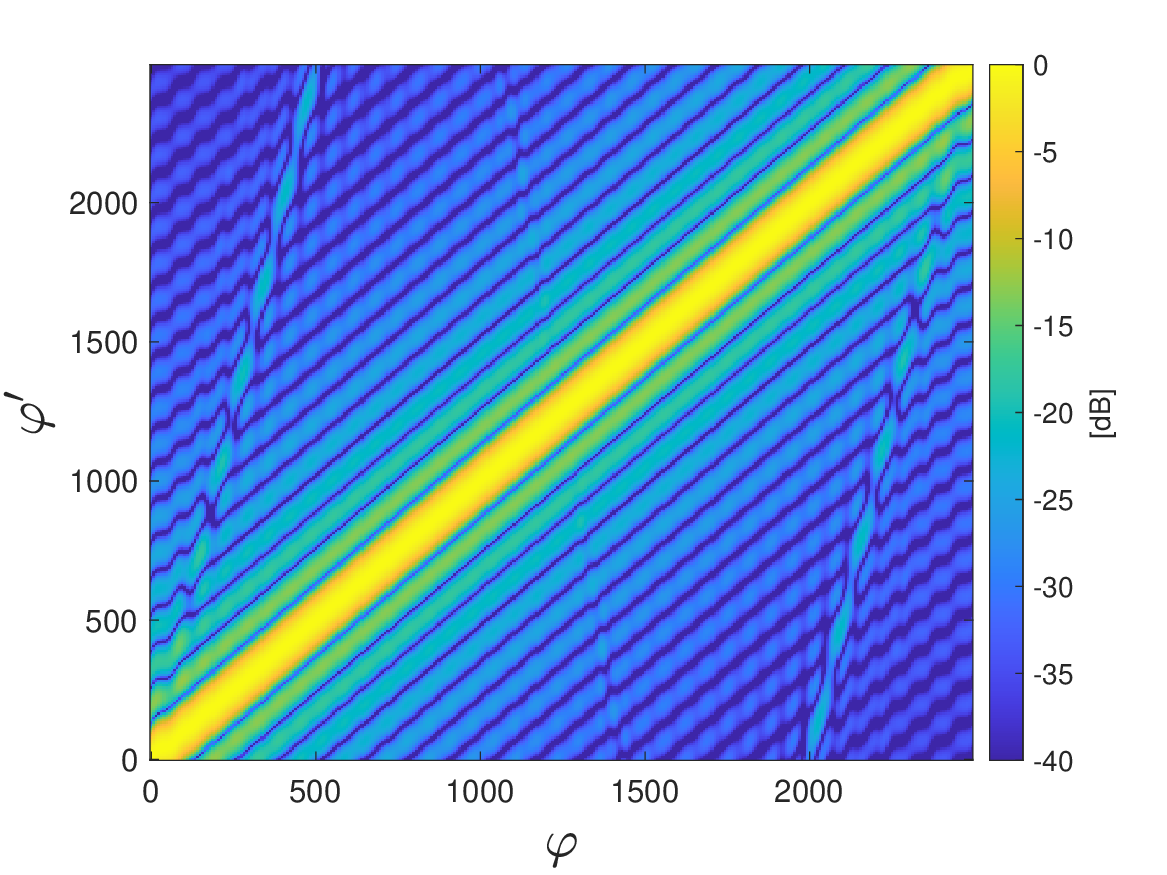} \caption{$\left|MAF\right|, U=4$}
\label{fig 4 MAF freq}
\end{subfigure}
\caption{Comparison of \ac{MAF}s for oversampled frequency estimation for sampling rate $f_s=2.5$KHz $\cdot U$, observation interval $T=8$msec, complex amplitude phase $\angle{\beta}=0^\circ$, and $SNR=7$dB per sample.} \label{fig:MAFs oversampled frequency}
\end{figure}
The \ac{RMSE}s of the \ac{ML} estimator with fine-quantized measurements (\ref{eq: MML x}) are presented in Fig. \ref{fig 6: ML before quantization} versus \ac{SNR} per sample for several oversampling factors $U$, with frequency $\varphi=1$KHz, observation interval $T=4$msec, and complex amplitude phase $\angle{\beta}=60^\circ$. It can be seen that the \ac{RMSE}s of the three estimators coincide in the asymptotic region. Thus, oversampling of fine-quantized band-limited measurements does not reduce the \ac{RMSE} for large \ac{SNR}s. This result is expected, as oversampling does not improve band-limited signal reconstruction for sampling frequencies above the Nyquist rate.
\begin{figure}[htbp]
\centerline{\includegraphics[width=0.5\textwidth]{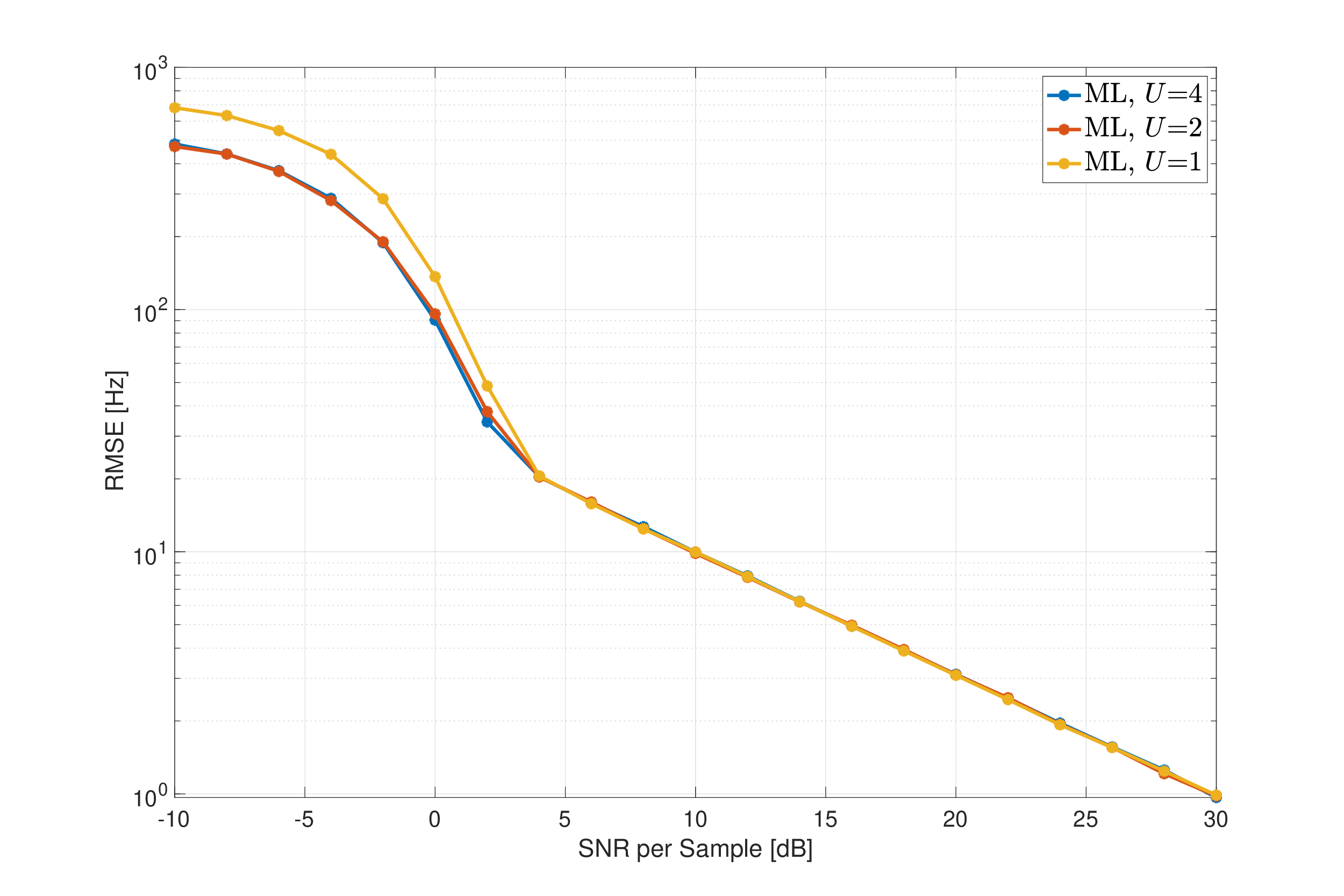}}
\caption{\ac{RMSE} of the \ac{ML} estimator with fine quantization measurements versus \ac{SNR} for sampling rate $f_s=2.5$KHz$ \cdot U$ and oversampling factors $U$, frequency $\varphi=1$KHz, observation interval $T=4$msec, and complex amplitude phase $\angle{\beta}=60^\circ$.}
\label{fig 6: ML before quantization}
\end{figure}
The \ac{RMSE}s of the \ac{ML} estimator with fine quantization measurements, $\hat{\varphi}(\xvec)$, and with misspecified quantized measurements, $\hat{\varphi}(\zvec)$, are presented in Fig. \ref{fig 7: ML before quantization and MML after quantization} versus \ac{SNR} per sample for several oversampling factors $U$, with frequency $\varphi=1$KHz, observation interval $T=4$msec, and complex amplitude phase $\angle{\beta}=60^\circ$. 
\begin{figure}[htbp]
\centerline{\includegraphics[width=0.5\textwidth]{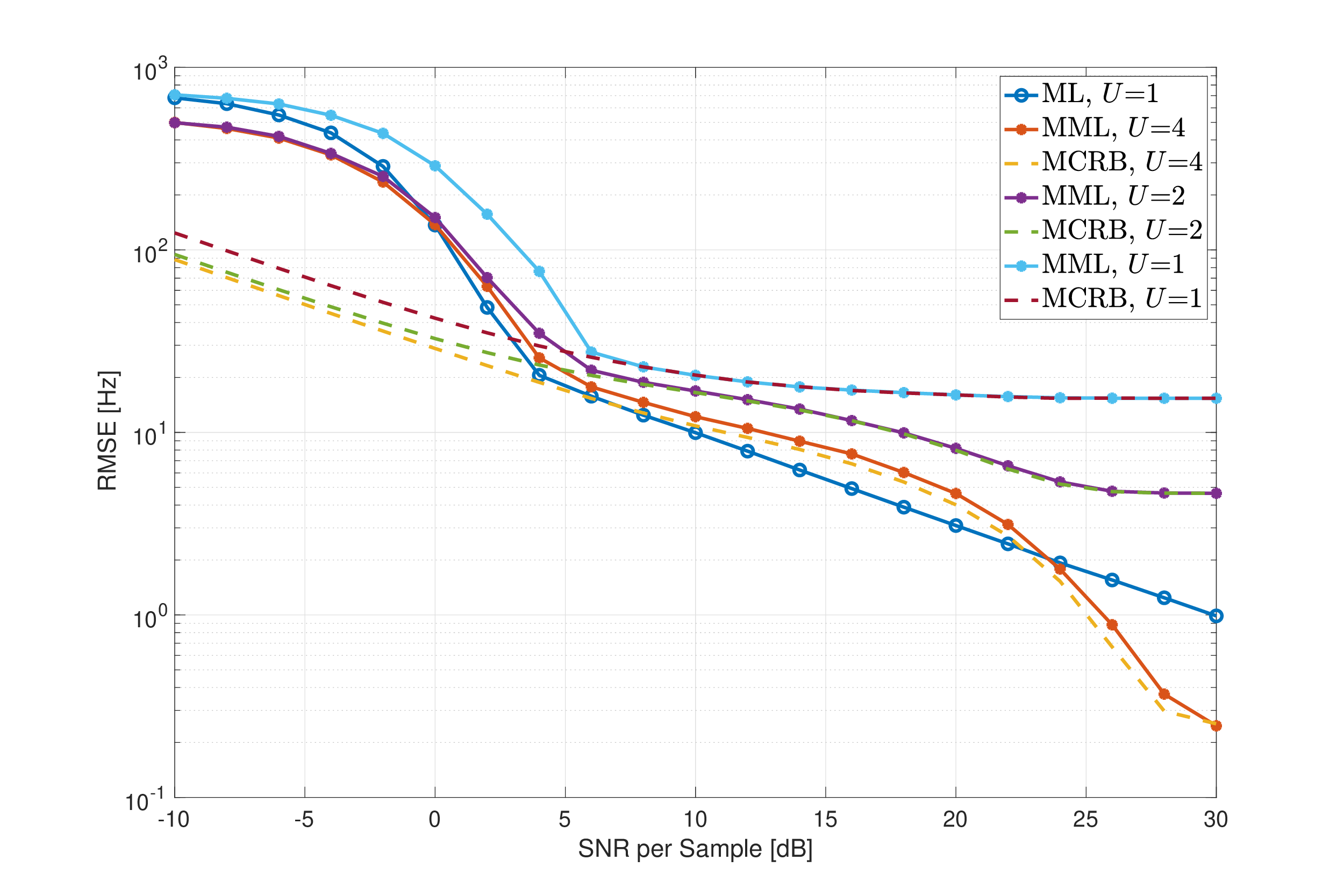}}
\caption{\ac{RMSE}s of the \ac{ML} estimator with fine quantization, misspecified \ac{ML} estimator of the quantized measurements, and \ac{MSE} bound versus \ac{SNR} for sampling rate $f_s=2.5$KHz $\cdot U$ and oversampling factors $U$, with frequency $\varphi=1$KHz, observation interval $T=4$msec, and complex amplitude phase $\angle{\beta}=60^\circ$.}
\label{fig 7: ML before quantization and MML after quantization}
\end{figure}
The \ac{RMSE} of the \ac{ML} estimator in (\ref{eq: MML x}) is presented only for $U=1$, because Fig. \ref{fig 6: ML before quantization} illustrated that oversampling of fine quantized measurements does not reduce the \ac{RMSE}. Moreover, the \ac{MSE} bound for the misspecified estimators due to the quantization is depicted for several oversampling factors in Fig. \ref{fig 7: ML before quantization and MML after quantization}. It can be observed that the \ac{RMSE} of the misspecified \ac{ML} $\hat{\varphi}(\zvec)$ reduces significantly as the oversampling factor $U$ increases for high \ac{SNR}s. The \ac{RMSE} performance of the misspecified \ac{ML} without oversampling ($U=1$) results in large bias for high \ac{SNR}s. 
% Increasing the oversampling factor to $U=2$ results in improved \ac{RMSE} performance of the misspecified \ac{ML} in high \ac{SNR}s, compared to the performance without oversampling. 
When the oversampling factor increases to $U=4$, the improvement in \ac{RMSE} performance of the misspecified \ac{ML} is most noticeable, with smallest expected bias and \ac{RMSE} lower than that of the \ac{ML} that relies on fine quantization measurements. The influence of oversampling on estimation performance is verified by the \ac{MSE} bound, which predicts the
asymptotic \ac{RMSE} of the misspecified \ac{ML}. 
Thus, oversampling can significantly improve the estimation performance when the measurements are quantized, even outperforming the estimation performance with fine-quantized measurements. One might ask how using oversampled and quantized measurements can improve the \ac{RMSE} performance compared to using fine quantization measurements. 
The misspecified \ac{ML} estimator based on quantized measurements belongs to the class of MS-unbiased estimators, which are generally biased in the perfectly specified case. Thus, the \ac{RMSE} of the misspecified \ac{ML} estimator is not bounded by the conventional \ac{CRB} based on fine quantization measurements, and can be lower than the \ac{CRB}. 
Recalling the discussion in Section \ref{sec: algorithms}, oversampling with factors $U=2,4$ and estimation with the one-bit quantization is more efficient or has similar computational costs compared to estimation based on fine quantization measurements. Thus, the estimation procedure $\hat{\varphi}(\zvec)$ after oversampling can gain better estimation performance compared to the estimation procedure (\ref{eq: MML x}) without significant additional computational costs. 

In order to investigate the influence of oversampling on estimation performance, the \ac{MSE} bound (\ref{eq: MSE bound}) is presented in Fig. \ref{fig 8: MCRB+bias MML oversampled} versus complex amplitude phase $\angle{\beta}$ for several oversampling factors $U$, with frequency $\varphi=1$KHz, and observation interval $T=4$msec. For most complex amplitude phases $\angle{\beta}$, increasing the oversampling factor $U$ reduces the \ac{MSE} performance, where significant reductions appear for oversampling factors $U=2,4$. However, for complex amplitude phases $\angle{\beta}=45^\circ,135^\circ,225^\circ,315^\circ$, the \ac{MSE} performance for $U=1$ is relatively low. Therefore, \ac{MSE} performance strongly depends on the complex amplitude phase $\angle{\beta}$ as well.
\begin{figure}[htbp]
\centerline{\includegraphics[width=0.5\textwidth]{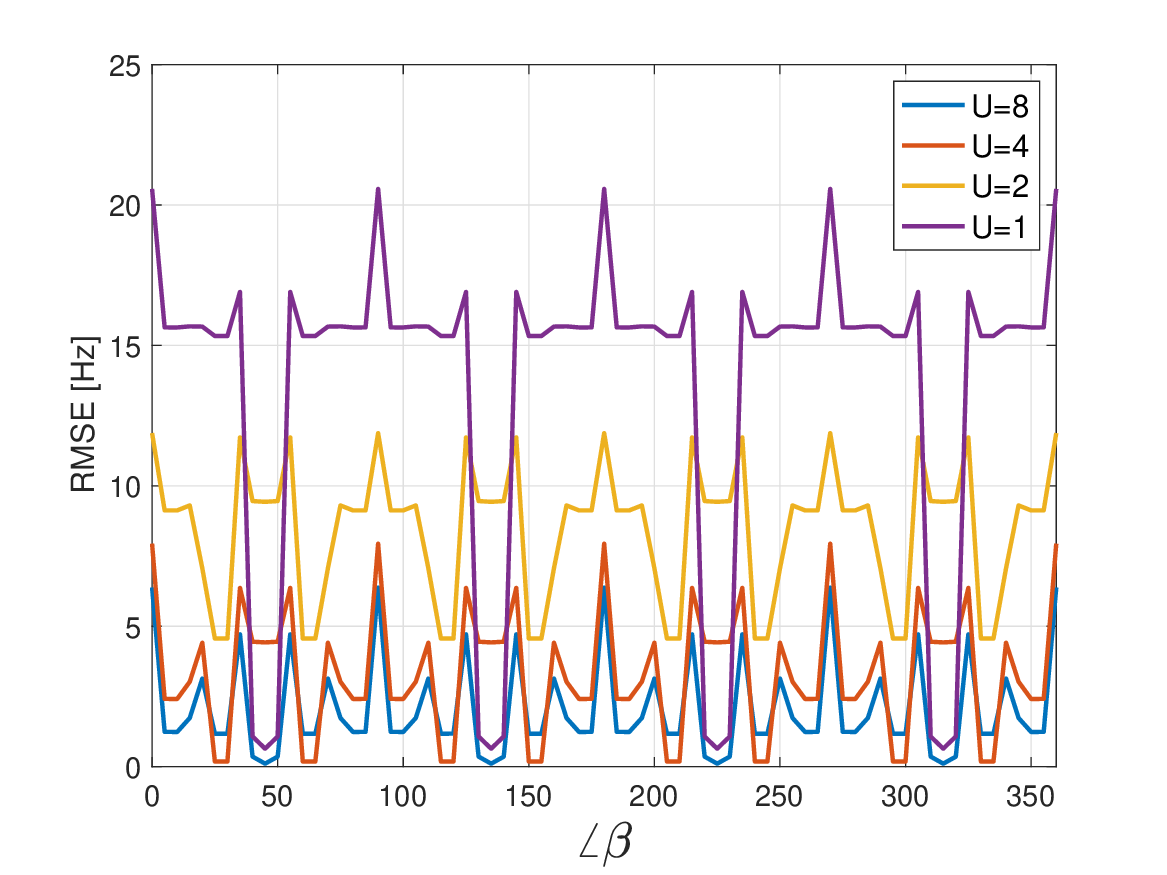}}
\caption{\ac{MSE} bound versus complex amplitude phase $\angle{\beta}$, \ac{SNR}=$30$dB per sample, for sampling rate $f_s=2.5$KHz $\cdot U$ and oversampling factors $U$, frequency $\varphi=1$KHz, and observation interval $T=4$msec.}
\label{fig 8: MCRB+bias MML oversampled}
\end{figure}

\section{Conclusion}
\label{sec: Conclusion}
In this paper, we considered the problem of parameter estimation with misspecified one-bit quantized measurements. Instead of regarding the quantization in the estimation procedure, which results in a complex estimator, one may ignore the quantization and use conventional estimation procedures, introducing misspecification in the model. In order to investigate the estimation performance, the \ac{MCRB} and expected bias are derived for quantized measurements with \ac{AWGN} or colored noise. A comparison of the computational cost of estimation procedures with fine quantization and one-bit measurements is given. 
% It suggests that misspecified estimation based on one-bit quantization is computationally simpler than the fine quantization case. 
Test cases of \ac{DOA} and frequency estimation are presented to investigate the influence of misspecification due to quantization. 
% The expected bias is shown to relate to the \ac{MAF}, which is a modified version of the \ac{AF} due to one-bit quantization. 
For the problem of \ac{DOA} estimation, the 
\ac{RMSE} performance of the misspecified \ac{ML} estimator and the \ac{MCRB} depend on \ac{DOA}, \ac{SNR}, and complex amplitude phase.
For the problem of frequency estimation, the influence of oversampling on quantized measurements is investigated. It is shown that oversampling reduces the effects of quantization on estimation performance, and can result in better \ac{MSE} performance compared to the \ac{MSE} performance of estimation based on fine quantization measurements.

The \ac{MCRB} derived in this paper provides a useful tool for analysis of misspecified one-bit quantization. Future work can investigate the expected performance as a function of different quantization parameters, such as quantization offset. Moreover, the \ac{MCRB} can be used to design optimal time-varying offset or optimal sampling time in non-uniform sampling. Another research directions can focus on exploring the effect of threshold region shift due to misspecified quantization. It can be done via derivation of misspecified Barankin-type bounds and their implementation for misspecified quantized model with oversampling. 
\newline
\section*{Appendix A \\ Statistics of Quantized Measurements} 
\addcontentsline{toc}{section}{Appendix A} % 
\label{Appendix: statistics}
The expectation of the quantized measurement $z_n$ can be derived by using (\ref{eq: z_n probabilites 1})-(\ref{eq: q_n}) to obtain
\begin{align} \label{eq: mu n}
    \mu_n (\thetavec) \triangleq \E \left[z_n\right],
\end{align}
where $\muvec(\thetavec) \triangleq \left[\mu_1(\thetavec),\dots,\mu_N(\thetavec)\right]^T$,
\begin{align} \label{eq: mu R n}
\begin{split}
    \mu_{n,R}(\thetavec) &\triangleq \E \left[z_{n,R}\right] = \sum_{z_{n,R} \in \{-1,1\}} z_{n,R} p_{z_{n,R};\thetavec}(z_{n,R};\thetavec)\\
    &=- Q\left(- q_{n,R}\left(\thetavec\right) \right) +Q\left(q_{n,R}\left(\thetavec\right) \right)\\
    &= 1-2Q\left(- q_{n,R}\left(\thetavec\right) \right),\\
\end{split}
\end{align}
in which the property $Q(-x) = 1 -Q(x)$ is used, and
similarly
\begin{align}
    \label{eq: mu I n}
\begin{split}
\mu_{n,I}(\thetavec) &\triangleq \E \left[z_{n,I}\right] = \sum_{z_{n,I} \in \{-1,1\}} z_{n,I} p_{z_{n,I};\thetavec}(z_{n,I};\thetavec)\\
&=  1-2Q\left(- q_{n,I}\left(\thetavec\right)\right).
    \end{split}
\end{align}
 As $z_{n,R}, z_{n,I} \in \{-1,1\}$, it can be concluded that $\E \left[z_{n,R}^2\right] = 1, $ and $\E \left[z_{n,I}^2\right] = 1$, 
and therefore 
\begin{align}\label{eq: z abs and squared}
\begin{split}
    \E \left[\left|z_n\right|^2\right] &= \E \left[z_{n,R}^2 + z_{n,I}^2\right] = 2,\\
     \E \left[z_n^2\right] &= \E \left[z_{n,R}^2 - z_{n,I}^2 + 2j z_{n,R} z_{n,I} \right] \\
     &= 2j \cdot \mu_{n,R}(\thetavec) \mu_{n,I}(\thetavec),
\end{split}
    \end{align}
where the last equality in (\ref{eq: z abs and squared}) is obtained because $z_{n,R}$ and $z_{n,I}$ are uncorrelated.

\section*{Appendix B \\ Proof of Proposition 1} 
\addcontentsline{toc}{section}{Appendix B} % 
\label{Appendix: proof} 
The search parameter $\varphi'$ is a value in a finite grid with $K$ possible values. 
For a grid parameter $\varphi'$, the correlator inside the argument in (\ref{eq: MML x}) is
% \begin{align} \label{eq: correlator z}
% \begin{split}
% c\left(\avec(\varphi'),\zvec\right) =&\sum_{n=1}^N  a^*_n(\varphi') z_n \\
% \end{split}
% \end{align}
\begin{align} \label{eq: correlator x}
\begin{split}
c\left(\avec(\varphi'),\xvec\right) =&\sum_{n=1}^N  a^*_n(\varphi') x_n,
\end{split}
\end{align}
where the $n$-th element in the summation in (\ref{eq: correlator x}) results in
\begin{align}
    \label{eq: element n correlator x}
    \begin{split}
a^*_n(\varphi') x_n=& \left(a_{n,R}(\varphi') -ja_{n,I}(\varphi')\right) \left(x_{n,R}+j x_{n,I}\right)\\
=& \left(x_{n,R}a_{n,R}(\varphi') +x_{n,I} a_{n,I}(\varphi')\right)\\\
&+ j \left(x_{n,I} a_{n,R}(\varphi') -x_{n,R}a_{n,I}(\varphi') \right).  
    \end{split}
\end{align}
As the real and imaginary parts $x_{n,R}$ and $x_{n,I}$ satisfy $x_{n,R},x_{n,I} \in \mathbb{R}$, the computational cost of (\ref{eq: element n correlator x}) results in 4 multiplications and 2 additions, with cost $6C_{op}$.   

% \begin{align} \label{eq: element n correlator z}
%     \begin{split}
% a^*_n(\varphi') z_n=& \left(a_{n,R}(\varphi') -ja_{n,I}(\varphi')\right)\left(z_{n,R}+j z_{n,I}\right)\\
% =& \left(z_{n,R}a_{n,R}(\varphi') +z_{n,I} a_{n,I}(\varphi')\right)\\\
% &+ j \left(z_{n,I} a_{n,R}(\varphi') -z_{n,R}a_{n,I}(\varphi') \right).  
%     \end{split}
% \end{align}
The correlator $c\left(\avec(\varphi'),\zvec\right)$ can be defined similarly to (\ref{eq: correlator x}) with the quantized measurements $\zvec$. The $n$-th element in the summation of the correlator $c\left(\avec(\varphi'),\zvec\right)$ 
gives four possible results:
\begin{align}\label{eq: element n correlator results}
\begin{split}
    a^*_n(\varphi') z_n \in \Big \{& f_n(\varphi')+j \cdot d_n(\varphi'),\: d_n(\varphi')-j \cdot f_n(\varphi'),\\
    &-d_n(\varphi')+j \cdot f_n(\varphi'), \: -f_n(\varphi')-j \cdot d_n(\varphi') \Big\},  
\end{split}
\end{align}
where 
\begin{align} \label{eq: a_n}
\fvec(\varphi')\triangleq \Re\left(\avec(\varphi')\right)+\Im\left(\avec(\varphi')\right)
\end{align} and
\begin{align}\label{eq: d_n}
\dvec(\varphi') \triangleq \Re\left(\avec(\varphi')\right)-\Im\left(\avec(\varphi')\right),
\end{align}
because $z_{n,R},z_{n,I} \in \{-1,1\}$. 
The four possible terms in (\ref{eq: element n correlator results}) are evaluated using additions or subtractions (without multiplications) in the elements of (\ref{eq: a_n}) and (\ref{eq: d_n}). Specifically, the computational cost of (\ref{eq: element n correlator results}) is four additions/subtractions to yield the values $\{\pm f_n(\varphi'),\pm d_n(\varphi')\}$. Thus, the possible values in (\ref{eq: element n correlator results}) can be evaluated in cost of $4C_{op}$. 

The computational cost to evaluate each possible result of $a^*_n(\varphi') z_n$ for every quantized sample $z_n, \quad n=1,\dots,N,$ and each grid parameter $\varphi'$ is therefore $4C_{op}NK$. Those computations can be executed once in an offline procedure and saved in a look-up table.
Thus, those computations are negligible in terms of computational cost.

To evaluate the complex summations in the correlators (\ref{eq: correlator x}) and $c\left(\avec(\varphi'),\zvec\right)$, $2(N-1)$ additions are required. Two additional multiplications and one sum are required to evaluate the squared absolute values of the correlators (\ref{eq: correlator x}) and $c\left(\avec(\varphi'),\zvec\right)$. 

To summarize, the computational costs for evaluating the squared absolute values of the correlators $c\left(\avec(\varphi'),\zvec\right)$ and (\ref{eq: correlator x}) for each grid parameter $\varphi'$ are
\begin{align} \label{eq: Cz}
\begin{split}
    C_{\zvec} =&\left(2(N-1) + 3 \right) C_{op} K\\
    =& \left(2N+1\right)C_{op}K
\end{split}
    \end{align}
and 
\begin{align} \label{eq: Cx}
\begin{split}
    C_{\xvec} &= C_{op} \left(6N + 2(N-1) + 3 \right) K\\
    &=C_{op} \left(8 N + 1 \right) K,
\end{split}
\end{align}
respectively.
% For an estimation algorithm which produces a result $\hat{\varphi}(\xvec)$ in rate $F$, the computational cost to evaluate (\ref{eq: MML x}) is given by using (\ref{eq: Cx}):
% \begin{align} \label{eq: Cx,F}
% \begin{split}
%     C_{\xvec,F} =& C_{\xvec} F\\
%     =&C_{op} \left(8 N + 1 \right) K F.
% \end{split}
% \end{align}
% However, as mentioned earlier, the computations to evaluate all possible values of $a^*_n(\varphi') z_n$ can be executed in an offline procedure in the cost $4C_{op}NK$ without $F$ repetitions. Thus, the computational cost to evaluate $\hat{\varphi}(\zvec)$ in rate $F$ is given by using (\ref{eq: Cz})
% \begin{align} \label{eq: Cz,F}
% \begin{split}
%     C_{\zvec,F} =& 4C_{op} N K + \left(2(N-1) + 3 \right) C_{op} K F\\
%     =&4C_{op} N K + \left(2N+1 \right)  C_{op} K F.
% \end{split}
%     \end{align} 
The reduction in computational complexity obtained by using one-bit measurements instead of fine quantization measurements is given by the ratio of (\ref{eq: Cz}) and (\ref{eq: Cx}),
% :
% \begin{align}\label{eq: CU}
% \begin{split}
%      \frac{C_{\zvec,F}}{C_{\xvec,F}}
%     =& \frac{4C_{op} NK + C_{op} \left(2N+1 \right) K F}{C_{op} \left(8 N + 1 \right) K F
%     }.
% \end{split}
% \end{align}
% For relatively large rate $F$ \ac{s.t.} $F \gg 2$, the first term in the numerator in the \ac{r.h.s.} of (\ref{eq: CU}) is negligible,
which gives (\ref{eq: CU approx}).
% \begin{align} \label{eq: CU approx}
%     \frac{C_{\xvec,F}}{C_{\zvec,F}} \xrightarrow{F \gg 2}{4}. 
%     % \frac{C_{op} \left(2NU+1 \right) K F}{C_{op} \left(8 N + 1 \right) K F} \approx \frac{U}{4}.
% \end{align}

\bibliographystyle{IEEEtran}
\bibliography{strings}

\end{document}